\newcommand{\ra}{\rightarrow}
\newcommand{\tr }{\mbox{tr}}
\newcommand{\hol }{\mbox{Hol}}
\newcommand{\spr}{\mbox{\rm Spr}\,} 
\newcommand{\bra}{\langle} 
\newcommand{\ket}{\rangle}
\newcommand{\E}{{\mathbb E}}
\newcommand{\D}{{\mathbb D}}
\newcommand{\be}{\begin{equation}}
\newcommand{\ee}{\end{equation}}
\newcommand{\bea}{\begin{eqnarray}}
\newcommand{\eea}{\end{eqnarray}}
\newcommand{\ffi}{\varphi}
\newcommand{\sign}{\mbox{sign}}
\newcommand{\ep}{\hfill  {\vrule height 10pt width 8pt depth 0pt}}
\newcommand{\grintl}{[\kern-.18em [}
\newcommand{\grintr}{]\kern-.18em ]}
\newcounter{resultcounter}[section]
\newtheorem{thm}[resultcounter]{Theorem}
\newtheorem{lem}[resultcounter]{Lemma}
\newtheorem{prop}[resultcounter]{Proposition}
\newtheorem{cor}[resultcounter]{Corollary}
\newtheorem{definition}[resultcounter]{Definition}
\newtheorem{rem}[resultcounter]{Remark}
\newtheorem{rems}[resultcounter]{Remarks}
 \def\cB{{\cal B}} 
 \def\cH{{\cal H}}
\newcommand{\R}{{\mathbb R}}
\newcommand{\N}{{\mathbb N}}
\newcommand{\C}{{\mathbb C}}
\newcommand{\Z}{{\mathbb Z}}
\renewcommand{\E}{{\mathbb E}}
\renewcommand{\P}{{\mathbb P}}
\newcommand{\I}{{\mathbb I}}
\newcommand{\T}{{\mathbb T}}
\begin{document}
\title{Density of States for Random Contractions }
 
\author{Alain Joye\footnote{ UJF-Grenoble 1, CNRS Institut Fourier UMR 5582, Grenoble, 38402, France} }

\date{ }

\maketitle
\vspace{-1cm}

\thispagestyle{empty}
\setcounter{page}{1}
\setcounter{section}{1}

\setcounter{section}{0}

\abstract{
We define a linear functional, the DOS functional, on spaces of holomorphic functions on the unit disk which is associated with random ergodic contraction operators on a Hilbert space, in analogy with the density of state functional for random self-adjoint operators. The DOS functional is shown to enjoy natural integral representations on the unit circle and on the unit disk. For random contractions with suitable finite volume approximations, the DOS functional is proven to be the almost sure infinite volume limit of the trace per unit volume of functions of the finite volume restrictions. Finally, in case the normalised counting measure of the spectrum of the finite volume restrictions converges in the infinite volume limit, the DOS functional is shown admit an integral representation on the disk in terms of the limiting measure, despite the discrepancy between the spectra of non normal operators and their finite volume restrictions. Moreover, the integral representation of the DOS functional on the unit circle is related to the Borel transform of the limiting measure.
}

\thispagestyle{empty}
\setcounter{page}{1}
\setcounter{section}{1}

\setcounter{section}{0}

\section{Introduction}

The density of states measure is an important mathematical notion in the study of the spectral properties of random self-adjoint operators, with a well defined physical significance, see e.g. the textbooks \cite{cfks, cl, Ki}. It is the measure associated to a positive functional acting on compactly supported continuous functions on the real axis, related to the random self-adjoint operator. Given a function, the functional is defined as the expectation of a diagonal matrix element of the function of the random operator, and the Riesz representation theorem provides the associated  density of state measure.  For operators defined $\Z^d$, a physically appealing definition consists in considering the trace per unit volume of functions of the random operator restricted to boxes $\Lambda\subset \Z^d$ by suitable boundary conditions, and in taking the limit $\Lambda\ra \Z^d$. Under ergodicity assumptions, the limit exists almost surely and the two notions coincide. This procedure works equally well for unitary operators, see e.g. \cite{J1}.

We revisit these constructions in the framework of bounded random operators that are not necessary normal, and study some of their properties, as described below. By rescaling, we can restrict attention to contraction operators that we assume are completely non unitary (cnu for short). 

As the there is no continuous functional calculus in this framework, we resort to the holomorphic functional calculus for cnu contractions developed on a Hardy space of the disk, as recalled in Section \ref{fccnuc}. This allows us to define a density of state functional (DOS functional for short) on $H^\infty(\D)$ in analogy with that defined for self-adjoint operators, see Definition \ref{defdos}. We show  in Proposition \ref{repfun} that the DOS functional possesses a natural integral representation on the unit circle $\T$ by a function $\ffi\in L^1(\T)\setminus H_0^1(\D)$, and that when restricted to  the disc algebra $A(\D)=H^\infty(\D)\cup C(\bar\D)$, it further admits  an integral representation on the disk by a complex harmonic function $m_\ffi$, as Proposition \ref{harmden} shows. 

Restrictions of random contractions to finite volume boxes $\Lambda\subset \Z^d$ are considered in Section \ref{sfva}, under suitable ergodicity assumptions. The infinite volume limit of the trace per unit volume of functions of random contractions is shown to coincide with the DOS functional defined via the full operator in Propositions \ref{llt} and \ref{limth}. We make use of this alternative construction of the DOS functional on $A(\D)$ to show that 
{\it a priori} estimates of the spectral radius of the finite volume restrictions imply more structure on the integral representations on $\T$ and $\D$ by means of $\ffi$ and $m_\ffi$: these functions are shown to be defined by the complex conjugate of a  function $\psi$ that is holomorphic in a neighbourhood of the unit disk, see Theorem \ref{smoothfi} and Corollary \ref{scalprod}. Then, we consider the situation where the normalised counting measures on the spectra of finite volume restrictions of the random contractions converge weakly , in the infinite volume limit. Theorem \ref{convcount} states that, despite the fact that finite volume restrictions of non normal operators have spectra that generally differ significantly from the full operator \cite{D1, D2, GK2, TE}, the limiting measure provides us with yet an alternative integral representation of the DOS functional on the unit disk.
Moreover, the holomorphic function $\psi$ is directly related to the Borel transform of the limiting measure.

Some examples are worked out in Section \ref{illustr} to illustrate the various features of the DOS functional. 
We start with random contractions defined as multiples of random unitary operators. Then
we consider the non self-adjoint Anderson model (NSA model), whose finite volume restrictions have eigenvalue distributions that give rise to limiting measure, see {\it e.g.} \cite{GK, GK2, D2}. Finally, the DOS functional computed for certain non unitary unitary band matrices, whose spectral properties are studied in \cite{HJ2}, and whose finite volume restrictions display similar features as those of the NSA model.

\section{Functional Calculus for CNU Contractions}\label{fccnuc}

We recall from \cite{SNF} the main properties of the functional calculus developed for contractions. 

Let $T$  be a contraction on a separable Hilbert space $\cH$. Consider the unique decomposition 
\be\label{ucnu}
T=T_0\oplus T_1 \ \ \mbox{on}\ \ \cH=\cH_0\oplus \cH_1,
\ee 
where $\cH_0=\{\psi\in \cH \ | \ \|T^n\psi\|=\|\psi\|=\|T^{*n}\psi\|, n\in \N\}$, $\cH_1=\cH\ominus\cH_0$ and $T_j=T|_{\cH_j}$, $j=0,1$, 
such that $T_0$ is unitary, and $T_1$ is completely non unitary (cnu).
The analysis of random unitary operators is by now well known, so we restrict attention to cnu contractions and therefore assume that $T=T_1$ in the following. We recall here some basic facts from harmonic analysis. In the following, $\D$ denotes the open disk, its boundary is $\partial\D$ that we will also identify with the torus $\T$. The set of holomorphic functions on an open set $S\subset\C$ is denoted by $\hol(S)$.

The Hardy class $H^p(\D)$ consists in holomorphic functions on $\D$ such that 
\be
\|u\|_p=\left\{
\begin{array}{ll}
\sup_{0<r<1}\left[\frac{1}{2\pi}\int_\T |u(re^{it})|^p dt\right] & \mbox{if} \ \ 0<p<\infty, \\
\sup_{z\in \D}|u(z)| &  \mbox{if} \ \ p=\infty.
\end{array}
\right.
\ee
For all $0<p\leq \infty$, functions in $H^p(\D)$ admit radial limits $\lim_{r\ra 1^-}u(re^{it})=u(e^{it})$ on $\partial\D$ for almost every $t\in \T$ and 
$\ln|u(e^{it})|\in L^1(\T)$. For $0<p<\infty$, $u\in H^p(\D)$ further satisfies $u(re^{i\cdot})\ra u(e^{i\cdot})$ in $L^p(\T)$ norm. Moreover, introducing $L^p_+(\T)$ as the set of functions in $f\in L^p(\T)$ whose negative Fourier coefficients all vanish, the spaces $H^p(\D)$ and $L^p_+(\T)$ can be isometrically identified, for all $1\leq p\leq \infty$. These function spaces are Banach spaces, and even Hilbert spaces whenever $p=2$. For later reference, we also introduce  for all $0<p\leq \infty$,  $H_0^p(\D)=\{g\in H^p(\D) \ | \ g(0)=0\}$. Finally, the disk algebra $A(\D)$ is defined as the set of continuous functions on $\bar \D$ that are holomorphic on $\D$, {\it i.e.} $A(\D)=H^\infty(\D)\cap C(\T)$. Let $\tilde{\cdot}$ denote the involution on holomorphic functions on $\D$ given by
$
\tilde{f}(z)= \overline{f(\bar{z})}$,
and set for any $0<r<1$
$
u_r(e^{it})=u(re^{it}).
$
The following statements, among other things, are proven in \cite{SNF}, Section III.2, Theorem 2.1. 
\begin{thm}\label{fcc}
Assume $T$ is a cnu contraction on a separable Hilbert space $\cH$. Then, for any $u\in H^\infty(\D)$, s.t. $u(z)=\sum_{n\geq 0}c_n z^n$ for $z\in\D$, $u(T)$ is defined by the strong limit
\be
u(T)=\mbox{s-}\lim_{r\ra 1^-}u_r(T), \ \ \mbox{where}\ \ u_r(T)=\sum_{n\geq 0}c_n r^nT^n,
\ee
which exists. The map $H^\infty(\D)\ni u \mapsto u(T)\in \cB(\cH)$ is an algebra homomorphism which further satisfies :
\bea
a) &&u(T)=\left\{\begin{array}{ll}
\I & \mbox{if} \ u(z)\equiv 1   \\
T & \mbox{if} \ u(z)=z 
\end{array}\right.  \nonumber \\
b) &&\|u(T)\|\leq \|u\|_\infty  \nonumber \\
c) &&u_n(T)\ra u(T)\ \ \mbox{in norm, resp. strong, resp. weak sense if }  \nonumber \\
&& u_n\ra u \ \mbox{uniformly on $\D$, resp. boundedly  a.e. on $\T$, resp. boundedly on $\D$}  \nonumber \\
d) && u(T)^*=\tilde{u}(T^*).
\eea
\end{thm}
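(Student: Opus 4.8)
The plan is to construct $u(T)$ in stages — polynomials, then functions holomorphic on a neighbourhood of $\bar\D$, then all of $H^\infty(\D)$ — with von Neumann's inequality supplying the norm control and the cnu hypothesis entering only where one must pass to the boundary. For a polynomial $p(z)=\sum_{n=0}^Nc_nz^n$ put $p(T)=\sum_{n=0}^Nc_nT^n$; this is visibly an algebra homomorphism on polynomials, equal to $\I$ when $p\equiv1$ and to $T$ when $p(z)=z$. By Sz.-Nagy's dilation theorem, $T$ admits a minimal unitary dilation $U$ on some $\cK\supseteq\cH$ with $T^n=P_\cH U^n|_\cH$ for $n\ge0$, so $p(T)=P_\cH p(U)|_\cH$; since $\sigma(U)\subseteq\T$, the spectral theorem together with the maximum principle gives $\|p(U)\|=\sup_\T|p|=\|p\|_\infty$, hence von Neumann's inequality $\|p(T)\|\le\|p\|_\infty$. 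The additional input that makes the whole construction possible — and, as I explain below, the main obstacle — is that because the dilation is minimal and $T$ is cnu, the spectral measure $E_U$ of $U$ is absolutely continuous with respect to arc length measure $m$ on $\T$.

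Next, fix $u\in H^\infty(\D)$, $u(z)=\sum_{n\ge0}c_nz^n$, and $0<r<1$; the series converges absolutely at each point of $\D$, so $\sum_{n\ge0}|c_n|r^n<\infty$, and since the partial sums of $u_r(T):=\sum_{n\ge0}c_nr^nT^n$ are norm-Cauchy by von Neumann's inequality, $u_r(T)$ is defined in operator norm with $\|u_r(T)\|\le\sup_{|z|\le r}|u(z)|\le\|u\|_\infty$. A Cauchy-product computation on these absolutely convergent series gives $(uv)_r(T)=u_r(T)v_r(T)$, and passing to the norm limit in $\sum_{n=0}^Nc_nr^nT^n=P_\cH\big(\sum_{n=0}^Nc_nr^nU^n\big)|_\cH$ identifies $u_r(T)=P_\cH u_r(U)|_\cH$, where $u_r(U)=\int_\T u_r\,dE_U$ is the normal functional calculus. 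Now by Fatou's theorem $u_r(e^{it})=u(re^{it})\to u(e^{it})$ for a.e.\ $t\in\T$, with $|u_r|\le\|u\|_\infty$, so, $E_U$ being absolutely continuous, dominated convergence in $\|(u_r(U)-u(U))x\|^2=\int_\T|u_r-u|^2\,d\langle E_U(\cdot)x,x\rangle$ yields $u_r(U)\to u(U):=\int_\T u\,dE_U$ strongly; compressing to $\cH$ gives the existence of the strong limit, which we call $u(T):=P_\cH u(U)|_\cH$. This absolute continuity is where I expect the real difficulty to lie: the compression is routine, but the fact that complete non-unitarity forces $E_U\ll m$ is the genuinely nontrivial ingredient of the Sz.-Nagy--Foias theory — a nonzero singular part of $E_U$ would produce a nonzero subspace reducing $T$ on which $T$ acts unitarily. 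With $u(T)$ in hand, linearity of $u\mapsto u(T)$ is clear, and $(uv)(T)=u(T)v(T)$ follows by letting $r\to1^-$ in $(uv)_r(T)=u_r(T)v_r(T)$: write $u_r(T)v_r(T)x=u_r(T)[v_r(T)x-v(T)x]+u_r(T)(v(T)x)$ and use the uniform bound $\|u_r(T)\|\le\|u\|_\infty$, the strong convergence $v_r(T)x\to v(T)x$, and $u_r(T)(v(T)x)\to u(T)(v(T)x)$.

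The listed properties now follow. Property (a) is inherited from the polynomial stage. For (b), $\|u(T)\|=\|P_\cH u(U)|_\cH\|\le\|u(U)\|=\|u\|_{L^\infty(\T)}=\|u\|_\infty$. For (c): if $u_n\to u$ uniformly on $\D$, then $\|u_n-u\|_\infty\to0$ and (b) gives norm convergence $u_n(T)\to u(T)$; if $u_n\to u$ boundedly a.e.\ on $\T$, dominated convergence in $\int_\T u_n\,dE_U$ gives $u_n(U)\to u(U)$ strongly, hence $u_n(T)\to u(T)$ strongly after compression; if $u_n\to u$ only boundedly on $\D$, one first shows $u_n\to u$ in the weak-$*$ topology of $L^\infty(\T)$ — by Vitali's theorem the dilates converge uniformly on each circle $|z|=\rho<1$, and convolution against the Poisson kernel, whose approximate-identity estimate in $L^1(\T)$ is uniform in $n$ since $\sup_n\|u_n\|_\infty<\infty$, transfers this to the boundary — and then, since the scalar measures $\langle E_U(\cdot)x,y\rangle$ have densities in $L^1(\T)$, one obtains $\langle u_n(U)x,y\rangle\to\langle u(U)x,y\rangle$, i.e.\ weak convergence of $u_n(T)$ to $u(T)$; this cannot be improved to strong convergence in general, as $u_n(z)=z^n$ shows.

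For (d), start from the polynomial identity $p(T)^*=\sum\bar c_n(T^*)^n=\tilde p(T^*)$, valid because $T^*$ is again a cnu contraction — the subspace $\cH_0$ of (\ref{ucnu}) is symmetric under $T\leftrightarrow T^*$, so $T$ cnu implies $T^*$ cnu. Applying this identity to the dilates and noting $\widetilde{u_r}=(\tilde u)_r$ with $\|\tilde u\|_\infty=\|u\|_\infty$, one has $u_r(T)^*=(\tilde u)_r(T^*)$ for every $r$. Letting $r\to1^-$, the left-hand side converges weakly to $u(T)^*$ because $u_r(T)\to u(T)$ strongly, while the right-hand side converges strongly, hence weakly, to $\tilde u(T^*)$ by the construction above applied to $T^*$ and $\tilde u\in H^\infty(\D)$; therefore $u(T)^*=\tilde u(T^*)$.
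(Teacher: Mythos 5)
This theorem is not proved in the paper at all: it is imported verbatim from Sz.-Nagy--Foias (Section III.2, Theorem 2.1 of \cite{SNF}), so there is no in-paper argument to compare against. Your reconstruction follows the standard dilation-theoretic route of that source — polynomial calculus plus von Neumann's inequality via the minimal unitary dilation $U$, identification $u_r(T)=P_\cH u_r(U)|_\cH$, and passage to the boundary by dominated convergence against the spectral measure of $U$ — and it is essentially correct, including the limit arguments for multiplicativity, the three modes of convergence in (c), and the adjoint identity (d). You correctly isolate the one genuinely deep input, namely that the minimal unitary dilation of a cnu contraction has spectral measure absolutely continuous with respect to arc length (Theorem II.6.4 of \cite{SNF}); note that your parenthetical heuristic for this fact is not a proof (a singular part of $E_U$ lives in $\cK$, not in $\cH$, so it does not directly yield a reducing unitary part of $T$), and the equality $\|p(U)\|=\sup_\T|p|$ should be an inequality $\le$ since $\sigma(U)$ need not be all of $\T$ — but neither point affects the argument.
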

\begin{rems}\label{fr}
i) By the identification of $H^\infty(\D)$ and $L^\infty_+(\T)$, this functional calculus can be viewed as a homomorphism $L^\infty_+(\T)\ni f\ra f(T)\in \cB(\cH)$, with 
\bea\label{cflp}
&&f(T)=\left\{\begin{array}{ll}
\I & \mbox{if} \ f(t)=1 \ \mbox{a.e.}   \\
T & \mbox{if} \ f(t)=e^{it} \ \mbox{a.e.} 
\end{array}\right.  \nonumber \\ 
&& \|f(T)\|\leq \|f\|_\infty .
\eea
ii) Conditions a), c) in the strong sense and (\ref{cflp}) make this functional calculus maximal and unique, see \cite{SNF}.\\
iii) If $u\in A(\D)$, $\|u_n(T)-u(T)\|\ra 0$, as $n\ra\infty$.\\
iv) If $T=T_0\oplus T_1$, as in (\ref{ucnu}), with unitary part $T_0$ having purely absolutely continuous spectrum, then Theorem \ref{fcc} holds with c) in the strong sense. See \cite{SNF}, Theorem 2.3.
\end{rems}

\section{DOS Functional}\label{genass}

We deal here with random contractions $T_\omega$ defined on a separable Hilbert space $\cH$ with ergodic properties we express as follows. 

We first assume some regularity assumptions.
Let  the probability space $(\Omega, {\cal F}, \P)$, where $\Omega$ is identified with $\{{X}^{\Z^d} \}$, 
$X\subset \R$, $d\in \N$, and $\P=\otimes_{k\in\Z^d}d\mu$, where $d\mu$ is a probability distribution 
on $X$ and ${\cal F}$ is the $\sigma$-algebra generated by the cylinders. We assume that 
$\Omega\ni\omega\mapsto T_\omega\in \cB(\cH)$ is measurable i.e.
\be\label{meast}
\forall \ffi, \psi\in \cH, \ \Omega\ni\omega\mapsto\bra \ffi | T(\omega)\psi\ket \ \  \mbox{is measurable}.
\ee 

Ergodicity is expressed in the following framework. We consider $\cH=l^2(\Z^d)$ and consider 
for $j\in \Z^d$ the shift operator $S_{j}$ defined on $\Omega$ by
\be
  S_j(\omega)_k=\omega_{k+j},  \ \  k\in\Z^d, \ \ \mbox{where} \ \omega_k\in X,
\ee
so that the measure $\P$ is ergodic under the set of commuting translations $\{S_j\}_{j\in\Z^d}$.
Let $\{\ffi_j\}_{j\in\Z^d}$ denote the canonical basis of $\cH$ and  $V_j$ be the unitary operator defined by
\be
V_j\ffi_k=\ffi_{k-j}, \ \ \forall k\in\Z^d.
\ee
We further assume the existence of a periodic lattice $\Gamma\subset \Z^d$ spanned by $\{\gamma_i\}_{i\in\{1,2,\dots,d\}}$, $\gamma_i\in\Z^d$ and the corresponding primitive cell $B=\{\sum_{i=1}^d x_i\gamma_i, 0\leq x_i<1\}\cap \Z^d$ so that for any $j\in\Z^d$, there exist a unique $b\in B$ and a unique $g \in\Gamma$ with $j=b+g$.

The random contractions we consider are ergodic in the following sense:
\be\label{ero}
T_{S_g (\omega)}=V_g T_\omega V_g^{-1}, \ \ \forall \ g\in\Gamma.
\ee

\begin{definition}\label{defdos} The DOS functional $L: H^{\infty}(\D)\ra \C$ is defined for all $f\in H^{\infty}(\D)$ by
\be
L(f)=\frac{1}{|B|}\sum_{b\in B}\E(\bra \ffi_b| f(T_\omega) \ffi_b\ket), 
\ee
where $ |B|$ denotes the cardinal  of $B$ and $T_\omega$ is a measurable cnu random contraction.
\end{definition}
We first note that
\begin{prop}\label{repfun}
The map $L: H^{\infty}(\D)\ra \C$ is a bounded and admits the following integral representation: there exists $\ffi\in L^1(\T)\backslash H_0^1(\D)$, such that for all $f\in H^{\infty}(\D)$
\be\label{repffi}
L(f)=\int_{\T}f(e^{it})\ffi(e^{it})\frac{dt}{2\pi}.
\ee
\end{prop}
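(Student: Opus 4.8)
The plan is to recognise $L$ as a weak-$*$ continuous linear functional on $H^\infty(\D)$, viewed as the Banach space dual of $X:=L^1(\T)/H_0^1(\D)$, and then to read off \fer{repffi} directly from the predual pairing.

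First I would dispose of boundedness. By Theorem \ref{fcc} (b), $\|f(T_\omega)\|\le\|f\|_\infty$ for every $\omega$, so $|\bra\ffi_b|f(T_\omega)\ffi_b\ket|\le\|f\|_\infty$; averaging over $b\in B$ and integrating over the probability space $\Omega$ gives $|L(f)|\le\|f\|_\infty$, i.e. $\|L\|\le 1$ (in fact $\|L\|=1$, since $L(1)=1$ by Theorem \ref{fcc} (a)). Here one uses that $\omega\mapsto\bra\ffi_b|f(T_\omega)\ffi_b\ket$ is measurable, which follows from \fer{meast} together with the strong-limit definition of $f(T_\omega)$ in Theorem \ref{fcc}, so that $L$ is well defined. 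I would then invoke the classical duality $H^\infty(\D)\cong(L^1(\T)/H_0^1(\D))^*$, in which $f\in H^\infty(\D)$ pairs with the class $[\ffi]$ of $\ffi\in L^1(\T)$ by $\langle f,[\ffi]\rangle=\int_\T f(e^{it})\ffi(e^{it})\,\frac{dt}{2\pi}$; this pairing is well posed because $\int_\T fg\,\frac{dt}{2\pi}=0$ whenever $f\in H^\infty(\D)$ and $g\in H_0^1(\D)$, as one sees by comparing Fourier coefficients of the product $fg\in H^1(\D)$. Since $L^1(\T)$ is separable, so is $X=L^1(\T)/H_0^1(\D)$, and therefore the weak-$*$ topology on norm-bounded subsets of $H^\infty(\D)=X^*$ is metrizable.

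The crux is to prove that the bounded functional $L$ is weak-$*$ continuous. By the Krein--Smulian theorem this holds as soon as $\ker L$ is weak-$*$ closed, and since bounded sets carry a metrizable weak-$*$ topology it suffices to verify the sequential statement: if $(f_n)\subset\ker L$ with $\sup_n\|f_n\|_\infty<\infty$ and $f_n\to f$ weak-$*$, then $L(f)=0$. For this, fix $z\in\D$ and note that the Cauchy kernel $t\mapsto(1-ze^{-it})^{-1}$ lies in $C(\T)\subset L^1(\T)$, while the Cauchy integral representation of $H^1$ functions gives $g(z)=\int_\T g(e^{it})(1-ze^{-it})^{-1}\,\frac{dt}{2\pi}$ for all $g\in H^\infty(\D)$; hence $f_n\to f$ weak-$*$ forces $f_n(z)\to f(z)$ for every $z\in\D$, under the uniform bound $\sup_n\|f_n\|_\infty<\infty$. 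This is exactly the hypothesis ``$f_n\to f$ boundedly on $\D$'' of Theorem \ref{fcc} (c), so $f_n(T_\omega)\to f(T_\omega)$ in the weak operator topology for each $\omega$, and in particular $\bra\ffi_b|f_n(T_\omega)\ffi_b\ket\to\bra\ffi_b|f(T_\omega)\ffi_b\ket$ for every $b\in B$ and $\omega$, with modulus bounded by $\sup_n\|f_n\|_\infty$ thanks to Theorem \ref{fcc} (b). Dominated convergence in $\omega$ together with the finite average over $b\in B$ gives $L(f_n)\to L(f)$, hence $L(f)=0$. Thus $\ker L$ is weak-$*$ closed, $L$ is weak-$*$ continuous, and so $L$ is represented by an element $[\ffi]\in X=L^1(\T)/H_0^1(\D)$; choosing any representative $\ffi\in L^1(\T)$ gives \fer{repffi}. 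Finally $L(1)=1\neq0$, so the class $[\ffi]$ is nonzero in $L^1(\T)/H_0^1(\D)$, i.e. $\ffi\notin H_0^1(\D)$ for every choice of representative, which is the assertion $\ffi\in L^1(\T)\setminus H_0^1(\D)$.

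The main obstacle is the weak-$*$ continuity of $L$, and inside it the identification of weak-$*$ convergence of a norm-bounded sequence in $H^\infty(\D)=X^*$ with uniformly bounded pointwise convergence on $\D$ (via the Cauchy kernel, which lies in $L^1(\T)$): this is what lets one apply the ``boundedly on $\D$'' clause of the Sz.-Nagy--Foias functional calculus in Theorem \ref{fcc} (c). The upgrade from sequential weak-$*$ continuity on bounded sets to genuine weak-$*$ continuity is then delivered by separability of the predual and the Krein--Smulian theorem. The remaining ingredients --- boundedness, the Fourier-coefficient computation behind the $H^\infty$ duality, and dominated convergence --- are routine.
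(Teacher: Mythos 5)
Your proof is correct and follows essentially the same route as the paper: boundedness from Theorem \ref{fcc} b), then reduction of the representation to continuity of $L$ under bounded pointwise convergence on $\D$, verified via clause c) of Theorem \ref{fcc} and dominated convergence in $\omega$. The only difference is that where the paper cites Garnett's Theorem 5.3 for the equivalence between weak continuity and continuity under bounded pointwise convergence, you reprove the needed direction explicitly (duality $H^\infty(\D)\cong(L^1(\T)/H_0^1(\D))^*$, Krein--Smulian plus separability of the predual, and the Cauchy kernel to identify bounded weak-$*$ convergence with bounded pointwise convergence), which is a correct and self-contained substitute.
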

\begin{rems}
0) The set $H_0^1(\D)$ should be understood as the set of boundary values of these functions. \\ 
i) The ergodicity assumption (\ref{ero}) plays no role here.\\
ii) Such functionals on $H^{\infty}(\D)$ are called {\em weakly continuous}, see {\it e.g. } \cite{G}, Section V.\\
iii)  The representation (\ref{repffi}) says that for any $G\in H_0^1(\D)$
\be
\int_{\T}f(e^{it})(\ffi+G)(e^{it})\frac{dt}{2\pi}=\int_{\T}f(e^{it})\ffi(e^{it})\frac{dt}{2\pi}, \ \ \forall f\in H^\infty(\D),
\ee
see \cite{G}, Theorem 5.2. It shows that the relevant 
information is carried by the negative Fourier coefficients only. \\
iv) In case  $\ffi\in L^p(\T)\subset L^1(\T)$, with $p>1$, we can actually represent $L$ by a unique $\ffi^-\in L^p$ obtained from $\ffi$ by substracting the contribution from the sum over positive Fourier coefficients. We give conditions for this to hold in Theorem \ref{smoothfi} below.\\
v) Making the dependence on $T$ of the functional $L$ explicit in the notation, we have for all $f\in H^\infty(\D)$,
\be
L_T(f)=\overline{L_{T^*}(\tilde f)}.
\ee
\end{rems}
\begin{proof} 
Linearity and the bound $|L(f)|\leq \|f\|_\infty$ stem directly from the properties of the functional calculus recalled above. Then one makes use of the 
following equivalence, see \cite{G}, Theorem 5.3: $L$ is a weakly continuous functionals on $H^\infty(\D)$ iff $L$ is continuous under bounded pointwise convergence; {\it i.e.} if $f_n\in H^\infty(\D)$, $\|f_n\|_\infty \leq M$ and $f_n(z)\ra f(z)$, for all $z\in\D$, then $L(f_n)\ra L(f)$. For such a sequence $f_n$, we have by point c) of Thm. \ref{fcc} that $\bra \ffi_k|f_n(T_\omega) \ffi_k\ket\ra \bra \ffi_k|f(T_\omega) \ffi_k\ket$ for all $k\in \Z^d$, and $|\bra \ffi_k|f_n(T_\omega) \ffi_k\ket|\leq M$, uniformly in $n, k, \omega$. Hence by Lebesgue dominated convergence, we also have $\E(\bra \ffi_b|f_n(T_\omega) \ffi_b\ket)\ra \E(\bra \ffi_b|f(T_\omega) \ffi_b\ket)$, for any $b\in B$,  which yields $L(f_n)\ra L(f)$, for $|B|$ finite. \ep
\end{proof}
\begin{rems}
i) The bound $|L(f)|\leq \|f\|_\infty$ is saturated: 
$
L(1)=1.
$
\\
ii) For any $j\in\N$, the function $z\mapsto z^j \in H^\infty(\D)$ and, denoting the Fourier coefficients of $\ffi$ by $\{\hat \ffi(k)\}_{k\in\Z}$,
\be
L({\cdot }^j)=\int_\T e^{ijt}\ffi(e^{it})\frac{1}{2\pi}=\hat\ffi(-j), \ \ \mbox{with } \ \ \hat\ffi(0)=1.
\ee 
\end{rems}

Then we observe that for functions in $A(\D)\subset H^\infty(\D)$,  we get an alternative representation of $L(\cdot)$ on the disk.\bigskip
 
Let us denote by $P[\ffi](re^{it})=P[\ffi](x,y)$ the harmonic function in $ \D$ given by the Poisson integral of $\ffi\in L^1(\T)$, with the usual abuse of notation. Due to the fact that $\lim_{r\ra 1^{-}}P[\ffi](re^{it})=\ffi(e^{it})$ almost everywhere and in $L^1(\T)$ norm, we can approximate $L(f)$ by an integral over smooth functions:  for any $f\in A(\D)$, and any $0<r<1$,
\bea\label{harm1}
L(f)&=&\int_{\T}f(e^{it})P[\ffi](re^{it})\frac{dt}{2\pi}+l_1(r), \ \mbox{where} \nonumber\\
 |l_1(r)|&\leq& \|f\|_\infty \|\ffi(\cdot)-P[\ffi](r\cdot)\|_{L^1(\T)}\ra 0 \ \mbox{as} \ r\ra 1^-.
\eea
Since $A(\D)$ consists in uniformly continuous functions, we can further approximate $f(e^{it})$ by $f(r{e^{it}})\equiv f_r(e^{it})$ to get
\bea\label{harm2}
L(f)&=&\int_{\T}f(re^{it})P[\ffi](re^{it})\frac{dt}{2\pi}+l_2(r), \ \mbox{where} \nonumber\\
 |l_2(r)|&\leq& \|f\|_\infty \|\ffi(\cdot)-P[\ffi](r\cdot)\|_{L^1(\T)}+\|f-f_r\|_{H^\infty}\|\ffi\|_{L^1(\T)}\ra 0 \ \mbox{as} \ r\ra 1^-.
\eea

The latter approximation allows us to provide $L(f)$ with a smooth integral representation over the disk.
\begin{prop}\label{harmden}
Let $\ffi\in L^1(\T)$ be the integral representation of $L(f)$ and $P[\ffi]$ its Poisson integral on $\D$. Then there exists a harmonic function $m_\ffi$ on $\D$ such that
for all $f\in A(\D)$,
\bea\label{intrepdis}
L(f)&=&\int_\D f(x+iy)m_\ffi(x,y)dxdy, \ \mbox{where} \nonumber \\
m_\ffi(x,y)&=&\frac{1}{2\pi}(\partial_x+i\partial_y)\{(x-iy)P[\ffi](x,y)\}.
\eea
\end{prop}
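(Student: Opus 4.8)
\medskip
\noindent\emph{Plan of proof.}\ \ The idea is to feed the approximation \fer{harm2} — which already writes $L(f)$, for $f\in A(\D)$, as the limit as $r\to 1^-$ of the circle integrals $\int_\T f(re^{it})\,P[\ffi](re^{it})\,\frac{dt}{2\pi}$ — into the complex form of Green's theorem, thereby turning each circle integral into an area integral over the disk $\D_r:=\{z\in\C\ :\ |z|<r\}$, $r<1$.

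First, rewrite $m_\ffi$ in Wirtinger notation: with $\partial_z=\tfrac12(\partial_x-i\partial_y)$, $\partial_{\bar z}=\tfrac12(\partial_x+i\partial_y)$ and $\bar z=x-iy$, the formula defining $m_\ffi$ becomes
\be
m_\ffi=\tfrac1\pi\,\partial_{\bar z}\big(\bar z\,P[\ffi]\big)=\tfrac1\pi\big(P[\ffi]+\bar z\,\partial_{\bar z}P[\ffi]\big).
\ee
Since $P[\ffi]$ is harmonic on the simply connected domain $\D$, it decomposes as $P[\ffi]=a+\bar b$ with $a,b$ holomorphic on $\D$; hence $\bar z\,\partial_{\bar z}P[\ffi]=\bar z\,\overline{b'}=\overline{z\,b'}$ and $m_\ffi=\tfrac1\pi\big(a+\overline{b+z\,b'}\big)$. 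In particular $m_\ffi$ is harmonic — as claimed — and real-analytic on $\D$, so the area integrals $\int_{\D_r}f\,m_\ffi\,dx\,dy$ appearing below are ordinary (finite) Lebesgue integrals.

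Now fix $f\in A(\D)$ and $0<r<1$. Since $f$ is holomorphic, $\partial_{\bar z}f=0$, so on $\D_r$ one has $f\,m_\ffi=\tfrac1\pi\,\partial_{\bar z}\big(f\,\bar z\,P[\ffi]\big)$, with $f\,\bar z\,P[\ffi]\in C^1(\overline{\D_r})$. Applying the complex Green identity $\int_{\D_r}\partial_{\bar z}F\,dx\,dy=\tfrac1{2i}\oint_{|z|=r}F\,dz$ (Stokes' theorem for the $1$-form $F\,dz$) gives
\be
\int_{\D_r}f(z)\,m_\ffi(z)\,dx\,dy=\frac{1}{2\pi i}\oint_{|z|=r}f(z)\,\bar z\,P[\ffi](z)\,dz .
\ee
Parametrising $z=re^{it}$ yields $\bar z\,dz=i\,r^2\,dt$, so the right-hand side equals $r^2\int_\T f(re^{it})\,P[\ffi](re^{it})\,\tfrac{dt}{2\pi}$. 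Letting $r\to 1^-$, the prefactor $r^2\to 1$ and, by \fer{harm2}, the integral $\int_\T f(re^{it})\,P[\ffi](re^{it})\,\tfrac{dt}{2\pi}$ tends to $L(f)$; therefore $\int_{\D_r}f\,m_\ffi\,dx\,dy\to L(f)$, which is the asserted integral representation on the disk.

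The point that deserves care is the very meaning of the area integral over $\D$: it must be understood as $\lim_{r\to1^-}\int_{\D_r}$, i.e. as an improper integral along the exhaustion of $\D$ by the disks $\D_r$, since for a generic $\ffi\in L^1(\T)$ the density $m_\ffi$ — which involves one derivative of the Poisson integral — need not be Lebesgue integrable up to $\T$, and the finiteness of the limit rests precisely on the cancellation recorded in \fer{harm2}. When $\ffi$ carries extra smoothness — for instance under the hypotheses of Theorem \ref{smoothfi}, where $\ffi$ is the trace on $\T$ of a function holomorphic in a neighbourhood of $\bar\D$ — $m_\ffi$ extends continuously to $\bar\D$ and the representation holds with a genuine Lebesgue integral over $\D$. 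The remaining ingredients — the decomposition $P[\ffi]=a+\bar b$ and the bookkeeping in Green's formula — are routine.
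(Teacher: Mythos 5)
Your proof is correct and follows essentially the same route as the paper: both convert the circle integrals in \fer{harm2} into area integrals over $r\D$ via Stokes/Green's theorem applied to $f(z)\bar z P[\ffi](z)$ and then let $r\to 1^-$. Your closing remark that the integral over $\D$ must in general be read as $\lim_{r\to 1^-}\int_{\D_r}$ is a careful point the paper leaves implicit, but it does not change the argument.
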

\begin{proof}
Consider the approximation (\ref{harm2}). By Stokes theorem applied to $f(z)P[\ffi](z)\in C^\infty(\D)$, we have
\bea
&&\int_{\T}f(re^{it})P[\ffi](re^{it})\frac{dt}{2\pi}=\int_{r\partial \D} f(z)P[\ffi](z)\bar z\frac{dz}{r^2 2i\pi}\nonumber\\
&&=\int_{r\D}\frac{\partial}{\partial \bar z}\left\{f(z)P[\ffi](z)\bar z\right\}\frac{d\bar z \wedge dz}{r^2 2i\pi}=\int_{r\D}f(z)\frac{\partial}{\partial \bar z}\left\{P[\ffi](z)\bar z\right\}\frac{dxdy}{r^2 \pi}.
\eea
Thanks to (\ref{harm2}), we can take the limit $r\ra 1^-$ which yields 
\be
m_\ffi(z)=\frac1\pi\frac{\partial}{\partial \bar z}\left\{P[\ffi](z)\bar z\right\}=\frac1\pi\left(P[\ffi](z)+\left(\frac{\partial}{\partial \bar z}P[\ffi](z)\right)\bar z\right).
\ee
 Using the fact that $P[\ffi]$ is harmonic, one finally gets
\be
\frac{\partial}{\partial  z}m_\ffi(z)=\frac1\pi\frac{\partial}{\partial  z}P[\ffi](z), \ \ \mbox{and} \ \ \frac{\partial^2}{\partial \bar z\partial  z}m_\ffi(z)=0.
\ee
\ep
\end{proof}
\begin{rem} In keeping with the fact that $\ffi\mapsto \ffi+G$, where $G\in H_0^1(\T)$ does not change the representation, one checks that
\be
m_{\ffi +G}(z)= m_\ffi(z) + G(z)/\pi, \ \mbox{where} \ \int_\D f(z)G(z)dxdy=0.
\ee
\end{rem}

The integral representations of $L(\cdot)$ discussed so far are intrinsic. There are of course many alternative integral representations in the disk: 
since $A(\D)\subset C(\bar\D)$, we can extend $L$ to $\hat L: C(\bar\D)\ra\C$, by means of Hahn-Banach Theorem, with $\|L\|=\|\hat L\|$.  Given $\hat L$, since $\bar\D$ is compact, the Riesz Representation Theorem asserts the existence of a unique complex Borel measure $d\mu$ on $\D$ such that $\hat L(f)=\int_{\bar \D} f(x,y) d\mu(x,y)$ for all $f\in C(\bar\D)$ and $\|\hat L\|=|\mu|(\D)$. Thus, 
\begin{lem} Let $T_\omega$ be a measurable random contraction. There exists a complex Borel measure $d\mu$ on $\bar\D$ such that $|\mu|(\bar \D)=1$ such that
\be
L(f)=\int_{\bar\D} f(x+iy) d\mu(x,y), \ \ \forall  f\in A(\D).
\ee
\end{lem}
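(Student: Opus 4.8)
The plan is to combine two ingredients that are both already in place: the boundedness of $L$ on $H^\infty(\D)$ (Proposition \ref{repfun}) and the elementary functional-analytic machinery of Hahn--Banach and Riesz, exactly as sketched in the paragraph preceding the statement. First I would observe that $A(\D)$ is a closed subspace of the Banach space $C(\bar\D)$ (with the sup norm), and that the restriction of $L$ to $A(\D)$ is a bounded linear functional, since for $f\in A(\D)\subset H^\infty(\D)$ we have $\|f\|_{C(\bar\D)}=\|f\|_\infty$ and $|L(f)|\le\|f\|_\infty$ by Proposition \ref{repfun}; moreover $\|L|_{A(\D)}\|=1$ because $L(1)=1$ and $\|1\|_{C(\bar\D)}=1$. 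By the Hahn--Banach theorem there is an extension $\hat L\in C(\bar\D)^*$ with $\|\hat L\|=\|L|_{A(\D)}\|=1$.

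Next I would invoke the Riesz representation theorem for $C(\bar\D)^*$: since $\bar\D$ is a compact Hausdorff space, there is a unique regular complex Borel measure $d\mu$ on $\bar\D$ with $\hat L(f)=\int_{\bar\D}f(x+iy)\,d\mu(x,y)$ for all $f\in C(\bar\D)$, and $\|\hat L\|=|\mu|(\bar\D)$. Restricting to $f\in A(\D)$ gives the claimed integral representation, and the norm identity yields $|\mu|(\bar\D)=\|\hat L\|=1$. One should note the normalisation is consistent: testing against $f\equiv 1$ gives $\mu(\bar\D)=L(1)=1$, so in fact $|\mu|(\bar\D)=\mu(\bar\D)=1$ and $\mu$ is a genuine probability measure (not merely of total variation one); the statement as written only asserts the weaker $|\mu|(\bar\D)=1$, which is immediate from $\|\hat L\|=1$.

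Since every step is a direct citation of a standard theorem applied to the already-established bound $|L(f)|\le\|f\|_\infty$, there is essentially no obstacle here; this lemma is a soft corollary rather than a result requiring real work. The only point deserving a word of care is that the Hahn--Banach extension $\hat L$, and hence the measure $d\mu$, is \emph{not} unique---different extensions yield different representing measures on $\bar\D$---which is precisely why the preceding paragraph contrasts this representation with the \emph{intrinsic} ones of Propositions \ref{repfun} and \ref{harmden}. I would make this non-uniqueness explicit in the statement or a following remark so the reader is not misled by the definite article in "There exists a complex Borel measure", and I would also remark that any such $\mu$ is supported, modulo the action $\ffi\mapsto\ffi+G$ with $G\in H^1_0$, on $\bar\D$ rather than on the spectrum of $T_\omega$, consistent with the non-normality phenomena referenced via \cite{D1, D2, GK2, TE}.
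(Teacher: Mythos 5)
Your proof is correct and follows exactly the route the paper itself takes in the paragraph preceding the lemma: Hahn--Banach extension of $L|_{A(\D)}$ to $C(\bar\D)$ followed by the Riesz representation theorem, with $\|\hat L\|=1$ coming from $|L(f)|\le\|f\|_\infty$ and $L(1)=1$. Your extra observation that $\mu(\bar\D)=L(1)=1$ combined with $|\mu|(\bar\D)=1$ forces $\mu$ to be a genuine probability measure is a correct (and slightly sharper) addition, but the argument is otherwise the same as the paper's.
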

\begin{rem}  The measure $d\mu$ uniquely determined by the extension $\hat L$ of $L$ to $C(\bar \D)$.
The example discussed in Section \ref{nubm} illustrates the fact that there may be infinitey many such integral representations of $L$.
\end{rem}

\section{Finite Volume Approximations}\label{sfva}

Let $\Lambda\subset \Z^d$ be given by $(2n+1)^d$ symmetric translates of $B$ along $\Gamma$ of the form 
\be\label{lan}
\Lambda=\bigcup_{n_i\in\Z \atop -n\leq n_i\leq n}B+\sum_{i=1}^d n_i\gamma_i\equiv \bigcup_{m=1,\cdots,(2n+1)^d} B+g_m, 
\ee
with $|\Lambda|=(2n+1)^d|B|$, and let
\be
\cH_\Lambda={\mbox{span }}\{\ffi_j,  \ j\in \Lambda  \}, \ \ \cH_{\Lambda^C}=\cH\ominus \cH_\Lambda,
\ee
together with the corresponding orthogonal projections onto these subspaces $P_\Lambda, P_{\Lambda^C}$.
\bigskip

Dropping $\omega$ from the notation for now, let assume that the cnu contraction $T$ can be written as
\be\label{deff}
T=T^\Lambda\oplus T^{\Lambda^C}+F^\Lambda, 
\ee
where $T^\Lambda$ and $T^{\Lambda^C}$ are defined on $ \cH_\Lambda$ and $ \cH_{\Lambda^C}$ and $F^\Lambda$ is a trace class operator on $\cH$, and furthermore $T^\Lambda$  is a cnu contraction. 
Such  decompositions can be obtained for example by setting
\bea
T^\Lambda&=&T|_{\cH_\Lambda}+ \ \mbox{boundary conditions}\\
T^{\Lambda^C}&=&T|_{\cH_{\Lambda^C}}+ \ \mbox{boundary conditions},
\eea
for suitable boundary conditions at $\partial \Lambda$, see below. For any $f\in H^\infty(\D)$, the operators $f(T), f(T^\Lambda)$ are well defined by functional calculus and we consider two random functionals on $H^\infty(\D)$ given by
\be
L_{\Lambda}(f)=\frac{1}{|\Lambda |}\tr (f(T^\Lambda)), \ \ \
\tilde{L}_\Lambda(f)=\frac{1}{|\Lambda |}\tr (P_\Lambda f(T)P_\Lambda).
\ee
From the bound $\|A\|_1\leq \mbox{rank} (A)\ \|A\|$ on the trace norm $\|\cdot \|_1$,  we deduce that for all $f\in H^\infty(\D)$
\be
\|L_{\Lambda}(f)\|\leq \|f\|_\infty, \ \ \|\tilde L_{\Lambda}(f)\|\leq \|f\|_\infty,
\ee
so that all arguments of the proof of Proposition \ref{repfun} apply. Hence $L_\Lambda$ and $\tilde L_\Lambda$ can be written in the form (\ref{repffi}) with corresponding random $L^1(\T)$ functions $\ffi_\Lambda(e^{i\cdot})$ and $\tilde \ffi_\Lambda(e^{i\cdot })$. More precisely we have
\begin{lem} Let $\lambda_j$, $j=1,...,|\Lambda|$, be the eigenvalues of the cnu contraction $T^\Lambda$, repeated according to their algebraic multiplicities.
Then, $\forall \ f\in H^\infty(\D)$, $L_{\Lambda}(f)=\frac{1}{2\pi}\int_{\T}f(e^{it})\ffi_\Lambda(e^{it})dt$ where
\bea\label{fila}
\ffi_\Lambda(e^{it})&=&\frac{1}{|\Lambda|}\sum_{j=1}^{|\Lambda|}\frac{1}{1-\lambda_j e^{-it}}.
\eea
\end{lem}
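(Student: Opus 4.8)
The key observation is that the finite-dimensional operator $T^\Lambda$ is a cnu contraction on the $|\Lambda|$-dimensional space $\cH_\Lambda$, so its spectrum lies in the open disk $\D$; in particular all its eigenvalues $\lambda_j$ satisfy $|\lambda_j|<1$. Hence for each fixed $e^{it}\in\T$ the scalar $1-\lambda_j e^{-it}\neq 0$, so the right-hand side of \fer{fila} is a well-defined continuous function of $t$, and in fact it is the boundary value of the rational function $z\mapsto \frac{1}{|\Lambda|}\sum_j (1-\lambda_j \bar z)^{-1}$, whose only poles $z=1/\bar\lambda_j$ lie outside $\bar\D$; so $\ffi_\Lambda\in C(\T)\subset L^1(\T)$. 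The plan is then to verify the integral identity $L_\Lambda(f)=\frac{1}{2\pi}\int_\T f(e^{it})\ffi_\Lambda(e^{it})\,dt$ first on the monomials $f(z)=z^n$, $n\geq 0$, and then extend to all of $H^\infty(\D)$ by the weak-continuity argument already established in Proposition \ref{repfun}.

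\emph{Step 1: reduce to monomials.} For $f(z)=z^n$ we have $f(T^\Lambda)=(T^\Lambda)^n$, so $L_\Lambda({\cdot}^n)=\frac{1}{|\Lambda|}\tr((T^\Lambda)^n)=\frac{1}{|\Lambda|}\sum_{j=1}^{|\Lambda|}\lambda_j^n$, since the trace of a power of a finite matrix is the sum of the $n$-th powers of its eigenvalues counted with algebraic multiplicity (Jordan form, or Newton's identities). On the other side, for $|\lambda_j|<1$ one expands the geometric series $\frac{1}{1-\lambda_j e^{-it}}=\sum_{m\geq 0}\lambda_j^m e^{-imt}$, uniformly convergent in $t$, and integrates term by term against $e^{int}\frac{dt}{2\pi}$; only the $m=n$ term survives, giving $\frac{1}{2\pi}\int_\T e^{int}\ffi_\Lambda(e^{it})\,dt=\frac{1}{|\Lambda|}\sum_j\lambda_j^n$. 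Thus the two functionals agree on all monomials, hence on all polynomials by linearity.

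\emph{Step 2: pass to all of $H^\infty(\D)$.} Both $f\mapsto L_\Lambda(f)$ and $f\mapsto \frac{1}{2\pi}\int_\T f(e^{it})\ffi_\Lambda(e^{it})\,dt$ are bounded linear functionals on $H^\infty(\D)$ that are weakly continuous: the former by the bound $\|L_\Lambda(f)\|\leq\|f\|_\infty$ together with the argument of Proposition \ref{repfun} (point c) of Theorem \ref{fcc} applied to $T^\Lambda$ and dominated convergence over the finitely many eigenvalues), the latter because $\ffi_\Lambda\in L^1(\T)$ and bounded pointwise convergence $f_n(z)\to f(z)$ on $\D$ with $\|f_n\|_\infty\leq M$ forces $f_n(e^{it})\to f(e^{it})$ a.e.\ on $\T$ (radial limits) with $|f_n(e^{it})|\leq M$, so Lebesgue dominated convergence applies. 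Since polynomials are weak-$*$ sequentially dense in $H^\infty(\D)$ in the sense of bounded pointwise convergence (truncate and Abel-average the Taylor series, e.g.\ the Cesàro or Abel means $u_r$ of Theorem \ref{fcc}), and both functionals are weakly continuous and agree on polynomials, they agree on all of $H^\infty(\D)$. This also identifies $\ffi_\Lambda$ as (a representative of) the $L^1(\T)$ density furnished abstractly by Proposition \ref{repfun}, with negative Fourier coefficients $\hat\ffi_\Lambda(-n)=\frac{1}{|\Lambda|}\sum_j\lambda_j^n$.

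\emph{Main obstacle.} The only genuine subtlety is making sure that $\ffi_\Lambda$ as written in \fer{fila} really does lie in $L^1(\T)$ and that the interchange of sum/integral in Step 1 is legitimate; both hinge on the strict inequality $|\lambda_j|<1$, which is exactly where the cnu hypothesis on $T^\Lambda$ (not merely its being a contraction) is used — a genuine contraction could have eigenvalues on $\T$, producing non-integrable singularities. Once that is secured, the rest is the routine "agree on a dense set, then use weak continuity" pattern already deployed in Proposition \ref{repfun}.
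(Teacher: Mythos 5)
Your overall architecture (verify on monomials via $\tr((T^\Lambda)^n)=\sum_j\lambda_j^n$ and the geometric series, then extend by weak continuity) is a legitimate route, and Step 1 is correct. However, Step 2 contains a step that fails as written: you justify the weak continuity of $f\mapsto\frac{1}{2\pi}\int_\T f(e^{it})\ffi_\Lambda(e^{it})\,dt$ by claiming that bounded pointwise convergence $f_n(z)\to f(z)$ on $\D$ forces $f_n(e^{it})\to f(e^{it})$ a.e.\ on $\T$. This implication is false: take $f_n(z)=z^n$, which tends to $0$ pointwise on $\D$ with $\|f_n\|_\infty=1$, while the boundary values $e^{int}$ converge for no $t$. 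The conclusion you need — that integration against an $L^1(\T)$ density is continuous under bounded pointwise convergence on $\D$ — is nevertheless true, but it is precisely the nontrivial content of the equivalence the paper cites from Garnett (Theorem 5.3, already invoked in the proof of Proposition \ref{repfun}); it cannot be obtained from dominated convergence of boundary values. So your extension step needs that citation (or an argument via harmonic extensions / weak-$*$ convergence in $(L^1/H_0^1)^*$), not the a.e.\ boundary convergence you assert. With that repair the proof closes, since your density claim (dilates $f_r$ plus truncated Taylor series give polynomials converging to $f$ boundedly pointwise) is fine.

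For comparison, the paper avoids the density-plus-weak-continuity pattern altogether: since $\sigma(T^\Lambda)\subset\D$ strictly, it writes $L_\Lambda(f)=\frac{1}{|\Lambda|}\sum_j f(\lambda_j)$ directly for every $f\in H^\infty(\D)$, represents each $f(\lambda_j)$ by the Cauchy integral over a circle of radius $\rho$ with $\max_j|\lambda_j|<\rho<1$, and then lets $\rho\to 1^-$ using the radial limits of $f$ and dominated convergence \emph{for fixed} $f$ (where a.e.\ radial convergence is legitimate). That route is shorter and sidesteps the delicate point on which your argument stumbles; it does, however, quietly use that the Sz.-Nagy--Foias calculus coincides with the Riesz--Dunford calculus when the spectrum is strictly inside $\D$, so that $\tr f(T^\Lambda)=\sum_j f(\lambda_j)$ — a fact your monomial computation establishes more explicitly.
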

\begin{proof}
The finite dimensional contraction $T^\Lambda$ being cnu, $\sigma(T_\Lambda)\subset \D$. Hence, for any $f\in H^\infty(\D)$, and any $\rho<1$ large enough
\bea\label{repker}
L_{\Lambda}(f)&=&\frac{1}{|\Lambda|}\sum_{j=1}^{|\Lambda|}f(\lambda_j)=
\frac{1}{|\Lambda|}\sum_{j=1}^{|\Lambda|}\frac{1}{2i\pi}\int_{\rho\partial \D}\frac{f(z)}{z-\lambda_j}dz\nonumber \\
&=&\frac{1}{|\Lambda|}\sum_{j=1}^{|\Lambda|}\frac{\rho}{2\pi}\int_{\T}\frac{f_\rho(e^{it})}{\rho -\lambda_j e^{-it}}dt.
\eea
For each $j$, since $f\in H^\infty(\D)$, Lebesgue dominated convergence implies that the limit $\rho\ra1$ exists, which yields the result. \ep
\end{proof}
\begin{rems}
i)  An application of Stokes theorem shows that 
\be
L_{\Lambda}(f)=\frac{1}{|\Lambda|}\sum_{j=1}^{|\Lambda|}\frac1{\pi}\int_{\D} \frac{f(z)}{(1-\lambda_j \bar z)^2}dxdy,
\ee
an expression of the fact that $\frac1{\pi}(1-\lambda_j \bar z)^{-2}$ is the Bergman reproducing kernel.
 \\
ii) Introducing the normalised counting measure of $\sigma(T^\Lambda)$, $dm^\Lambda$  on the closed unit disc $\bar \D$ by
\be\label{countmeas}
dm^\Lambda(x,y)=\frac{1}{|\Lambda|}\sum_{j=1}^{|\Lambda|}\delta(x-\Re\lambda_j)\delta(y-\Im \lambda_j), \ \ \lambda_j\in \sigma(T^\Lambda), 
\ee
where the eigenvalues are repeated according to their algebraic multiplicities, we can extend $L_\Lambda$ to $C(\bar \D)$ by
\be
\hat L_{\Lambda}(f)=\int_{\bar \D}f(x,y)dm^\Lambda(x,y), \ \ \forall f\in C(\bar \D).
\ee
However, $\tilde{L}_\Lambda(f)$ does not make sense for $ f\in C(\bar \D)$.

\end{rems}

\subsection{Infinite Volume Limit}\label{ivl}

Restoring the variable $\omega$ in the notation for a moment, we show that the random functionals $L_{\Lambda, \omega}(f)$ and $\tilde L_{\Lambda, \omega}(f)$ converge almost surely to the deterministic DOS functional $L(f)$ as $\Lambda \ra \Z^d$, when $f\in A(\D)\subset H^\infty(\D)$. By $\Lambda \ra \Z^d$ or $|\Lambda|\ra\infty$, we mean $n\ra\infty$ in definition (\ref{lan}). This is done along the same lines as in the unitary case under ergodicity assumption, see \cite{J1}, for example.\bigskip

We start by a deterministic statement:
\begin{prop}\label{llt} Assume $T$ and  $T^\Lambda$ given by (\ref{deff}) are cnu contractions.\\
 If $\|P_\Lambda(T-T^\Lambda \oplus T^{\Lambda^C})\|_1=o(|\Lambda|)$ as $|\Lambda|\ra\infty$, then , for all $f\in A(\D)$,
\be
\lim_{|\Lambda|\ra\infty}L_{\Lambda}(f)-\tilde L_{\Lambda}(f)= 0.
\ee
\end{prop}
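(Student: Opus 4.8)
I would reduce the statement to monomials $f(z)=z^{k}$ by density of polynomials in $A(\D)$, and then exploit the block structure of the decomposition $T=T^{\Lambda}\oplus T^{\Lambda^{C}}+F^{\Lambda}$ together with the hypothesis $\|P_\Lambda F^{\Lambda}\|_{1}=o(|\Lambda|)$.

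For the reduction, recall that $L_{\Lambda}$ and $\tilde L_{\Lambda}$ are linear on $H^{\infty}(\D)$ with $\|L_{\Lambda}(g)\|\le\|g\|_{\infty}$ and $\|\tilde L_{\Lambda}(g)\|\le\|g\|_{\infty}$, uniformly in $\Lambda$. Given $f\in A(\D)$ and $\eps>0$, pick a polynomial $p$ with $\|f-p\|_{\infty}<\eps$ (polynomials are uniformly dense in the disk algebra). Then $|L_{\Lambda}(f)-L_{\Lambda}(p)|\le\eps$ and $|\tilde L_{\Lambda}(f)-\tilde L_{\Lambda}(p)|\le\eps$ for all $\Lambda$, so it suffices to prove $L_{\Lambda}(p)-\tilde L_{\Lambda}(p)\to0$, and by linearity it suffices to treat $p(z)=z^{k}$, $k\ge0$. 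For $k=0$, $z^{0}(T^{\Lambda})$ is the identity on $\cH_\Lambda$ and $P_\Lambda z^{0}(T)P_\Lambda=P_\Lambda$, both of trace $|\Lambda|$, so $L_{\Lambda}(1)-\tilde L_{\Lambda}(1)=0$ identically.

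Fix $k\ge1$ and set $A=T^{\Lambda}\oplus T^{\Lambda^{C}}$, so that $T=A+F$ with $F=F^{\Lambda}$ trace class, $A$ commuting with $P_\Lambda$, and $z^{k}(T^{\Lambda})=(T^{\Lambda})^{k}$, $z^{k}(T)=T^{k}$ by Theorem~\ref{fcc}. Since $A$ is block diagonal for $\cH=\cH_\Lambda\oplus\cH_{\Lambda^{C}}$, one has $P_\Lambda A^{j}=P_\Lambda A^{j}P_\Lambda$ with $\|P_\Lambda A^{j}P_\Lambda\|=\|(T^{\Lambda})^{j}\|\le1$ (because $T^{\Lambda}$ is a contraction), and, using cyclicity of the trace under the projection $P_\Lambda$, $\tr(P_\Lambda A^{k}P_\Lambda)=\tr\big((T^{\Lambda})^{k}\big)=|\Lambda|\,L_\Lambda(z^k)$ while $\tr(P_\Lambda T^{k}P_\Lambda)=|\Lambda|\,\tilde L_\Lambda(z^k)$. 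Combining the telescoping identity $A^{k}-T^{k}=-\sum_{j=0}^{k-1}A^{j}F\,T^{k-1-j}$ with these remarks gives
\be
|\Lambda|\,\big(L_{\Lambda}(z^{k})-\tilde L_{\Lambda}(z^{k})\big)=\tr\big(P_\Lambda(A^{k}-T^{k})\big)=-\sum_{j=0}^{k-1}\tr\big(P_\Lambda A^{j}F\,T^{k-1-j}\big).
\ee
The key observation is that $P_\Lambda$ can be inserted next to $F$ in each summand, $P_\Lambda A^{j}F=(P_\Lambda A^{j}P_\Lambda)(P_\Lambda F)$, whence
\be
\big\|P_\Lambda A^{j}F\,T^{k-1-j}\big\|_{1}\le\|P_\Lambda A^{j}P_\Lambda\|\,\|P_\Lambda F\|_{1}\,\|T^{k-1-j}\|\le\|P_\Lambda F^{\Lambda}\|_{1},
\ee
since $T$ is a contraction. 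Summing over $j$ yields $|L_{\Lambda}(z^{k})-\tilde L_{\Lambda}(z^{k})|\le k\,\|P_\Lambda F^{\Lambda}\|_{1}/|\Lambda|\to0$ by hypothesis; together with the reduction this gives $\limsup_{|\Lambda|\to\infty}|L_{\Lambda}(f)-\tilde L_{\Lambda}(f)|\le2\eps$ for every $\eps>0$, hence the claim.

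The only genuine difficulty is bookkeeping: every trace-norm estimate must be arranged so that only $\|P_\Lambda F^{\Lambda}\|_{1}$ enters, never the uncontrolled $\|F^{\Lambda}\|_{1}$; this is exactly why one uses the commutation $[A,P_\Lambda]=0$ together with the contraction property of $T^{\Lambda}$ (not of $T^{\Lambda^{C}}$, which is not assumed) to factor $P_\Lambda A^{j}F=(P_\Lambda A^{j}P_\Lambda)(P_\Lambda F)$. Everything else — density of polynomials in $A(\D)$, the uniform bounds on $L_{\Lambda}$ and $\tilde L_{\Lambda}$, and the finiteness of the traces involved ($T^{\Lambda}$ acts on the $|\Lambda|$-dimensional space $\cH_\Lambda$, and $P_\Lambda f(T)P_\Lambda$ has rank at most $|\Lambda|$) — is routine.
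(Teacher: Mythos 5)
Your proof is correct and follows essentially the same route as the paper: uniform polynomial approximation in $A(\D)$ combined with the bounds $|L_\Lambda(g)|,|\tilde L_\Lambda(g)|\le\|g\|_\infty$ to reduce to monomials, then a telescoping identity for $A^k-T^k$ controlled in trace norm by $k\,\|P_\Lambda F^\Lambda\|_1$. The only (cosmetic) difference is that you telescope with the powers of $A=T^\Lambda\oplus T^{\Lambda^C}$ on the left of $F^\Lambda$, which places $P_\Lambda$ directly adjacent to $F^\Lambda$ and so matches the hypothesis $\|P_\Lambda F^\Lambda\|_1=o(|\Lambda|)$ slightly more transparently than the paper's arrangement.
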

\begin{proof} We need to show that
\be
\lim_{|\Lambda|\ra\infty}\frac{1}{|\Lambda|}\left\{\tr (f(T^\Lambda) - \tr (P_\Lambda f(T) P_\Lambda)\right\}=0.
\ee
As $f(T^\Lambda) = P_\Lambda f(T^\Lambda) P_\Lambda$, cyclicity of the trace yields
\be
\tr (f(T^\Lambda) - P_\Lambda f(T) P_\Lambda)=\tr((f(T^\Lambda)- f(T))P_\Lambda)= \tr(P_\Lambda (f(T^\Lambda)- f(T))).
\ee
On the other hand, since $A(\D)$ consists in uniformly continuous functions on $\bar \D$, we have
\be
f(z)=\sum_{j=0}^N a_j z^j + R_N(z), \ \ \mbox{where} \ \  \sup_{z\in  \bar \D}|R_N(z)|=\|R_N\|_\infty\ra 0 \ \mbox{as} \  N\ra\infty.
\ee
Hence, together with the bound $\|A\|_1\leq \mbox{rank} (A)\ \|A\|$, we get the uniform estimate
\be\label{unibl}
\frac{1}{|\Lambda|}|\tr (R_N(T^\Lambda) - P_\Lambda R_N(T) P_\Lambda)|\leq \frac{\mbox{rank} (P_\Lambda)}{|\Lambda|} \|R_N(T^\Lambda) -R_N(T) \|\leq 2\|R_N\|_\infty.
\ee
Therefore, we can focus on $f(z)=z^j$, $j\in \N$. With 
\bea
(T^j-(T^\Lambda\oplus T^{\Lambda^C})^j)P_\Lambda&=&\sum_{k=0}^{j-1} T^k(T-T^\Lambda\oplus T^{\Lambda^C})(T^\Lambda\oplus T^{\Lambda^C})^{j-k-1}P_\Lambda\\
&=&\sum_{k=0}^{j-1} T^k(T-T^\Lambda\oplus T^{\Lambda^C})P_\Lambda(T^\Lambda\oplus \mathbb O)^{j-k-1}
\eea
we get for all $j\leq N$, using  $T$ and $T^\Lambda$ are contractions and cyclicity of the trace,
\be
\frac{1}{|\Lambda|}|\tr(P_\Lambda(T^j- (T^\Lambda\oplus T^{\Lambda^C})^j))|\leq \frac{j}{|\Lambda|}\|P_\Lambda(T-T^\Lambda \oplus T^{\Lambda^C})\|_1.
\ee
This estimate, the assumption $\|P_\Lambda(T-T^\Lambda \oplus T^{\Lambda^C})\|_1=o(|\Lambda|)$ and (\ref{unibl}) end the proof. \ep
\end{proof}

We finally turn to the infinite volume limit of $\tilde L_{\Lambda, \omega}(f)$, for $f\in A(\D)$, under the ergodicity assumption (\ref{ero}) on the way
randomness enters the contraction $T_\omega$.

\begin{prop}\label{limth}
Assume $T_\omega$ is ergodic in the sense of (\ref{ero}). For all $f\in A(\D)$, 
\be
\lim_{|\Lambda|\ra\infty}\tilde L_{\Lambda, \omega}(f)=L(f), \ \ \mbox{almost surely.}
\ee
\end{prop}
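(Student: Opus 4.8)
The plan is to reduce the statement to an ergodic theorem applied to a suitable family of random variables indexed by the sites $b\in B$, translated along the lattice $\Gamma$. First I would expand, as in the proof of Proposition \ref{repfun}, using the decomposition $\Lambda=\bigcup_{m} B+g_m$ from \eqref{lan}, to write
\be
\tilde L_{\Lambda,\omega}(f)=\frac{1}{|\Lambda|}\tr(P_\Lambda f(T_\omega)P_\Lambda)=\frac{1}{(2n+1)^d|B|}\sum_{m=1}^{(2n+1)^d}\sum_{b\in B}\bra \ffi_{b+g_m}|f(T_\omega)\ffi_{b+g_m}\ket.
\ee
Then I would use the covariance relation \eqref{ero}, together with $V_g\ffi_k=\ffi_{k-g}$, to identify $\bra \ffi_{b+g_m}|f(T_\omega)\ffi_{b+g_m}\ket=\bra \ffi_b|V_{g_m}^{-1}f(T_\omega)V_{g_m}\ffi_b\ket=\bra\ffi_b|f(V_{g_m}^{-1}T_\omega V_{g_m})\ffi_b\ket=\bra\ffi_b|f(T_{S_{g_m}\omega})\ffi_b\ket$, where the middle equality uses that $f\mapsto f(T)$ is an algebra homomorphism (Theorem \ref{fcc}) and $V_g$ is unitary, so conjugation by $V_g$ commutes with the functional calculus. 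Hence for each fixed $b$, the quantity $X_b(\omega):=\bra\ffi_b|f(T_\omega)\ffi_b\ket$ is a bounded ($|X_b|\le\|f\|_\infty$) measurable random variable, measurable by \eqref{meast}, and $\tilde L_{\Lambda,\omega}(f)=\frac{1}{|B|}\sum_{b\in B}\frac{1}{(2n+1)^d}\sum_{m=1}^{(2n+1)^d}X_b(S_{g_m}\omega)$.

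Next I would invoke the multidimensional Birkhoff ergodic theorem: since $\{S_g\}_{g\in\Gamma}$ is a $\Z^d$-action (identifying $\Gamma\cong\Z^d$ via the basis $\{\gamma_i\}$) and $\P$ is ergodic under $\{S_j\}_{j\in\Z^d}$, hence in particular ergodic under the subgroup $\{S_g\}_{g\in\Gamma}$ of finite index, the Cesàro averages over the symmetric boxes $\{-n\le n_i\le n\}$ of the bounded random variable $X_b$ converge almost surely and in $L^1$ to $\E(X_b)=\E(\bra\ffi_b|f(T_\omega)\ffi_b\ket)$ as $n\to\infty$. Taking a common full-measure set over the finitely many $b\in B$, averaging over $b$, and comparing with Definition \ref{defdos}, we get $\lim_{|\Lambda|\to\infty}\tilde L_{\Lambda,\omega}(f)=\frac{1}{|B|}\sum_{b\in B}\E(\bra\ffi_b|f(T_\omega)\ffi_b\ket)=L(f)$ almost surely. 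A point worth a line of care is whether the null set depends on $f$: it does \emph{a priori}, so I would first establish the convergence simultaneously for all $f$ in a countable dense subset of $A(\D)$ — say polynomials with rational coefficients — and then extend to all $f\in A(\D)$ using $\|\tilde L_{\Lambda,\omega}\|\le 1$ and $\|L\|\le 1$, i.e. a standard $\varepsilon/3$ argument exploiting that $A(\D)$ is separable and both functionals are uniformly bounded by $\|\cdot\|_\infty$.

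The only mild subtlety — and the main obstacle — is the ergodicity of the $\Gamma$-action: the hypothesis \eqref{ero} only gives covariance under $g\in\Gamma$, not under all of $\Z^d$, and ergodicity of $\{S_j\}_{j\in\Z^d}$ does not automatically pass to a subgroup of finite index. However, because $\P=\otimes_{k\in\Z^d}d\mu$ is an i.i.d. product measure, it is mixing under every nonzero shift, hence ergodic — in fact mixing — under the finite-index subgroup $\{S_g\}_{g\in\Gamma}$ as well; this is where one uses the concrete product structure of $(\Omega,\cF,\P)$ rather than just abstract ergodicity. With that settled, the rest is the routine combination of the covariance identity, the multidimensional ergodic theorem, and the separability/uniform-boundedness extension sketched above.
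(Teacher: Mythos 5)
Your proof is correct and follows essentially the same route as the paper's: the decomposition of $\Lambda$ into $\Gamma$-translates of $B$, the covariance identity turning $\bra\ffi_{b+g_m}|f(T_\omega)\ffi_{b+g_m}\ket$ into $\bra\ffi_b|f(T_{S_{g_m}\omega})\ffi_b\ket$, the multidimensional Birkhoff theorem for each $b\in B$, and the separability of $A(\D)$ plus the uniform bound $\|f\|_\infty$ to obtain a single full-measure set. Your additional observation that ergodicity under the finite-index subgroup $\{S_g\}_{g\in\Gamma}$ follows from the mixing of the i.i.d.\ product measure is a point the paper leaves implicit, and it is a worthwhile clarification, but it does not change the argument.
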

\begin{proof}
By construction of $\Lambda$, see (\ref{lan}), for any $H:\Z^d\ra \C$,  
\be
\sum_{k\in\Lambda}H(k)=\sum_{b\in B}\sum_{m=1}^{(2n+1)^d} H(b+g_m), \ \ \mbox{where} \ \ 
g_m=\sum_{i=1}^d n_i\gamma_i\in\Gamma.
\ee 
Thus, by the ergodicity assumption,
\bea
\tilde L_{\Lambda, \omega}(f)&=&\frac{1}{|\Lambda|}\sum_{k\in\Lambda}\bra \ffi_k| f(T_\omega) \ffi_k\ket
=\frac{1}{|\Lambda|}\sum_{b\in B}\sum_{m=1}^{(2n+1)^d}\bra \ffi_{b+g_m}| V_{g_m}^*f(T_{S_{g_m}(\omega)}) V_{g_m}\ffi_{b+g_m}\ket\nonumber \\
&=&\frac{1}{|\Lambda|}\sum_{b\in B}\sum_{m=1}^{(2n+1)^d}\bra\ffi_b| f(T_{S_{g_m}(\omega)}) \ffi_b\ket=
\frac{1}{|\Lambda|}\sum_{b\in B}\sum_{|n_i|\leq n}\bra\ffi_b| f(T_{S^{n_1}_{\gamma_1}\cdots S^{n_d}_{\gamma_d}(\omega)}) \ffi_b\ket.
\eea
Thanks to Birkhoff Theorem, we have on a set $\Omega_f\subset \Omega$ of measure one, and for all 
$b\in B$,
\be
\lim_{n\ra\infty} \frac{1}{(1+2n)^d}\sum_{|n_i|\leq n}\bra\ffi_b| f(T_{S^{n_1}_{\gamma_1}\cdots S^{n_d}_{\gamma_d}(\omega)}) \ffi_b\ket=\E(\bra\ffi_b| f(T_{\omega}) \ffi_b\ket).
\ee
Since $A(\D)$ is separable, the statement above is true for a dense countable set of functions $\{f_m\}_{m\in \N}$ on $\cap_{m\in \N} \Omega_{f_m}=\Omega_0\subset \Omega$, a set of measure one. Since $|B|<\infty$, we infer 
\be
\lim_{|\Lambda|\ra\infty}\frac{1}{|\Lambda|}\sum_{k\in\Lambda}\bra \ffi_k| f(T_\omega) \ffi_k\ket=\sum_{b\in B}\frac{1}{|B|}\E(\bra\ffi_b| f(T_{\omega}) \ffi_b\ket) \ \ \mbox{a.s.}
\ee
which proves the statement. \ep
\end{proof}
\begin{rems}\label{remlimth}
i) We consider $A(\D)$ only, a separable space, since $H^\infty(\D)$ is not.\\
ii) As $\left|\frac{1}{|\Lambda|}\sum_{k=1}^{|\Lambda|}\bra \ffi_k| f(T_\omega) \ffi_k\ket\right| \leq \|f\|_\infty$, Lebesgue dominated convergence implies that
\be\label{expval}
\lim_{|\Lambda|\ra\infty}\E(\tilde L_{\Lambda, \omega}(f))=\lim_{|\Lambda|\ra\infty}\frac{1}{|\Lambda|}\sum_{k=1}^{|\Lambda|}\E(\bra \ffi_k| f(T_\omega) \ffi_k\ket)=L(f).
\ee
iii) Under the assumptions of Propositions \ref{llt} and \ref{limth}
\be\label{thelim}
\lim_{|\Lambda|\ra\infty} L_{\Lambda, \omega}(f)=L(f), \ \ \mbox{almost surely.}\\
\ee
iv) The result also holds if $T=T_0\oplus T_1$ with purely absolutely continuous unitary part $T_0$.
\end{rems}

\subsection{ $\spr(T^\Lambda)<1$}
We show that if an {\it a priori} uniform estimate on the spectral radius of $T_\Lambda$ holds, we deduce anti-analyticity of the integral representation $\ffi$ of $L(f)$.\bigskip

Let us drop the dependence on $\omega\in \Omega$ in  the notation. The form $L_\Lambda$ is represented by integration against $\ffi_\Lambda(e^{it})$, see (\ref{fila}), which can be written with $z=e^{it}$ as
\be
\ffi_\Lambda(z)=\overline{\psi_\Lambda(z)}, \ \ \mbox{where} \ \psi_\Lambda(z)=\frac{1}{|\Lambda|}\sum_{j=1}^{|\Lambda|}\frac{1}{1-\bar \lambda_j z}. 
\ee 
As $T_\Lambda$ is cnu, $\psi_\Lambda$ is holomorphic in $\D$ and we have the absolutely converging power series
\be
\psi_\Lambda(z)=\sum_{k=0}^\infty \frac{\tr\ {T^*_\Lambda}^k}{|\Lambda|} z^k, \ \ \forall z\in \D.
\ee
\begin{thm} \label{smoothfi}
Let $T_\omega$ satisfy the hypotheses of Propositions \ref{llt} and \ref{limth}. Assume there exists $r<1$ such that
\be\label{bsprt}
\spr(T_\Lambda)\leq r, \ \ \forall\  \Lambda \in \Z^d\ \mbox{and} \ \forall \ \omega\in \Omega.
\ee 
Then, 
 there exists $\psi\in { \hol} (\D/r)$ such that  Proposition \ref{repfun} holds with $\ffi(e^{it})=\overline{\psi(e^{it})}$:
\be\label{inreho}
L(f)=\int_{\T}f(e^{it})\overline{\psi(e^{it})}\frac{dt}{2\pi}.
\ee
\end{thm}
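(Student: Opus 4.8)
The plan is to build $\psi$ directly out of the moments of the DOS functional and then to recognise $\overline\psi$ as a legitimate representing density in the sense of Proposition \ref{repfun}. Recall from the discussion preceding the statement that $L_\Lambda$ is represented on $\T$ by $\ffi_\Lambda(e^{it})=\overline{\psi_\Lambda(e^{it})}$, where $\psi_\Lambda(z)=\sum_{k\ge0}\tfrac{\tr {T_\Lambda^*}^k}{|\Lambda|}z^k$, and note that $\tfrac{\tr {T_\Lambda^*}^k}{|\Lambda|}=\overline{\tfrac1{|\Lambda|}\tr T_\Lambda^k}=\overline{L_\Lambda(\cdot^k)}$. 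The first ingredient is the a priori bound furnished by (\ref{bsprt}): since $T_\Lambda$ has all its eigenvalues in $\{|z|\le r\}$, writing $L_\Lambda(\cdot^k)=\tfrac1{|\Lambda|}\sum_j\lambda_j^k$ gives $|L_\Lambda(\cdot^k)|\le r^k$, uniformly in $\Lambda$ and $\omega$.

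Next I would pass to the infinite-volume limit coefficient by coefficient. Each monomial $\cdot^k$ lies in $A(\D)$, so the almost sure convergence (\ref{thelim}) --- valid under the hypotheses of Propositions \ref{llt} and \ref{limth} --- yields $L_\Lambda(\cdot^k)\to L(\cdot^k)$ a.s., and combined with the previous step $|L(\cdot^k)|\le r^k$ for every $k\ge0$. I then \emph{define}
\[
\psi(z):=\sum_{k\ge0}\overline{L(\cdot^k)}\,z^k .
\]
The estimate $|\overline{L(\cdot^k)}|\le r^k$ makes this series absolutely and uniformly convergent on $\{|z|\le\rho\}$ for each $\rho<1/r$, so $\psi\in\hol(\D/r)$; in particular $\overline{\psi(e^{it})}=\sum_{k\ge0}L(\cdot^k)e^{-ikt}$ is a uniformly convergent trigonometric series, hence continuous on $\T$ and a fortiori in $L^1(\T)$.

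It remains to check that $\overline\psi$ represents $L$. By construction $\overline\psi$ has Fourier coefficients $\widehat{\overline\psi}(-k)=L(\cdot^k)$ for $k\ge0$ and $\widehat{\overline\psi}(m)=0$ for $m>0$, while the density $\ffi$ of Proposition \ref{repfun} satisfies $\hat\ffi(-k)=L(\cdot^k)$ for all $k\ge0$ by the Remark following its proof. Hence $\ffi-\overline\psi\in L^1(\T)$ has vanishing Fourier coefficients of all indices $\le0$; under the identification $H^1(\D)\cong L^1_+(\T)$ this means $\ffi-\overline\psi\in H_0^1(\D)$. Invoking Remark iii after Proposition \ref{repfun} --- adding an element of $H_0^1(\D)$ to the density leaves the functional unchanged --- we conclude $L(f)=\int_\T f(e^{it})\overline{\psi(e^{it})}\,\tfrac{dt}{2\pi}$ for all $f\in H^\infty(\D)$; moreover $\overline\psi\notin H_0^1(\D)$ because its zeroth Fourier coefficient equals $L(1)=1$, so this is a bona fide representation of the type (\ref{repffi}).

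The only steps carrying genuine content are the first two: converting the uniform spectral-radius bound (\ref{bsprt}) into the geometric decay $|L(\cdot^k)|\le r^k$ of the negative Fourier coefficients of $\ffi$, which is precisely where the a priori estimate and the almost sure finite-volume convergence of the preceding subsections are used together. Everything after that is bookkeeping with Fourier coefficients within the framework already set up for Proposition \ref{repfun}; the one subtlety to watch is the Fourier-sign convention --- equivalently, that it is the traces of powers of $T_\Lambda^*$, not of $T_\Lambda$, that assemble into the holomorphic (rather than anti-holomorphic) object. As an alternative to the explicit construction one could argue via normal families: $\{\psi_\Lambda\}$ is uniformly bounded on compact subsets of $\D/r$, hence normal, and the pointwise convergence of its Taylor coefficients $\overline{L_\Lambda(\cdot^k)}\to\overline{L(\cdot^k)}$ identifies the unique limit $\psi$, after which $\overline{\psi_\Lambda}\to\overline\psi$ uniformly on $\T$ transports the representation of $L_\Lambda$ to that of $L$ on $A(\D)$, and weak continuity extends it to $H^\infty(\D)$.
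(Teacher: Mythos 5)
Your proof is correct, but it follows a genuinely different route from the paper's. The paper argues via normal families: the uniform bound $|\psi_\Lambda(z)|\leq (1-r|z|)^{-1}$ on $\D/r$ plus Montel's theorem yield, for each $\omega$ in the full-measure set of Proposition \ref{limth}, a subsequence $\psi_{\Lambda_k}$ converging locally uniformly to some $\psi\in\hol(\D/r)$; uniform convergence on $\T$ then transports the representation of $L_{\Lambda_k}$ to $L$, and the identification with the deterministic functional $L$ shows $\psi$ is independent of $\omega$ (and of the subsequence). You instead build $\psi$ explicitly from the moments, $\psi(z)=\sum_{k\geq 0}\overline{L(\cdot^k)}\,z^k$, using \fer{bsprt} to get $|L_\Lambda(\cdot^k)|\leq r^k$ uniformly and \fer{thelim} to pass to $|L(\cdot^k)|\leq r^k$, and then verify the representation \fer{inreho} purely at the level of Fourier coefficients, invoking the $H_0^1(\D)$-invariance noted after Proposition \ref{repfun}. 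All the steps check out: the sign conventions are handled correctly (it is indeed $\tr {T_\Lambda^*}^k/|\Lambda|=\overline{L_\Lambda(\cdot^k)}$ that forms the holomorphic object), the convergence $L_\Lambda(\cdot^k)\to L(\cdot^k)$ holds on the single full-measure set $\Omega_0$ by separability, and the observation $\widehat{\overline\psi}(0)=L(1)=1$ confirms $\overline\psi\notin H_0^1(\D)$. What your construction buys is the avoidance of subsequence extraction and an immediately transparent proof that $\psi$ is deterministic and unique; what the paper's Montel argument buys is that it bypasses any Fourier-coefficient bookkeeping and extends verbatim to settings where one only controls the family $\{\psi_\Lambda\}$ qualitatively. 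You correctly note the normal-families route as an alternative at the end --- that alternative is precisely the paper's proof.
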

\begin{rems} i) We can  satisfy the hypothesis by properly rescaling the operator $T_\omega$.\\
ii) The property $\spr(T_\Lambda)\leq r<1$ $\forall \Lambda$ does not imply $\spr(T)<1$. Indeed, finite volume approximations of non normal operators typically miss important parts of $\sigma(T)$, see Section \ref{nsam} and (\ref{diffspr}) in particular.
\end{rems}
\begin{proof}
If (\ref{bsprt}) holds, $\psi_\Lambda$ is holomorphic in the larger disc $\D/r$ for all $\Lambda\in \Z^d$, $\omega\in\Omega$ and 
\be
|\psi_\Lambda(z)|\leq \frac{1}{1-r|z|}, \ \ \forall z\in \D/r.
\ee
In particular, the family $\{\psi_\Lambda\}_{\Lambda}$ of holomorphic functions on $\D/r$ is uniformly bounded  on each compact subset of $\D/r$. Hence, by Montel Theorem, see e.g. \cite{R}, Theorem 14.6, $\{\psi_\Lambda\}_{\Lambda}$ is a normal family. Therefore, for each fixed $\omega\in \Omega_0$, the set of measure one 
on which Proposition \ref{limth} holds, there exists a subsequence $\{\psi_{\Lambda_k}\}_{k\in\N}$ 
which converges uniformly on each compact subsets of $\D/r$ to a function $\psi(z)$ which is holomorphic on $\D/r$. In particular, for all $f\in A(\D)$,
\be
\lim_{k\ra\infty }L_{\Lambda_k}(f)=\lim_{k\ra\infty }\int_{\T}f(e^{it})\overline{\psi_{\Lambda_k}(e^{it})}\frac{dt}{2\pi}= \int_{\T}f(e^{it})\overline{\psi(e^{it})}\frac{dt}{2\pi},
\ee
where $\psi(z)$ is analytic in a neighbourhood of $\D$. By Remark \ref{remlimth} iii), we get that $\ffi\in L^1(\T)$ which represents $L(f)$ is given by
$
\ffi(t)=\overline{\psi(e^{it})},
$
and $\psi$ is independent of $\omega$.
\ep
\end{proof}

Consequently, 
\begin{cor} \label{scalprod}
With  $\psi=\sum_{n=0}b_nz^n \in \hol (\D/r)$, and for all $f(z)=\sum_{n=0}a_nz^n\in A(\D)$, 
\be\label{explpsi}
L(f)=\bra \psi | f\ket_{L^2_+(\T)}= \bra \hat {\psi} | \hat f\ket_{l^2(\N)}=\sum_{n=0}^\infty  \bar b_n a_n.
\ee
 The integral representation (\ref{intrepdis}) reads
\be
m_\ffi(z)=m_{\overline{\psi}}(z)=\frac1\pi\frac{\partial}{\partial \bar z}\left\{\overline{\psi(z)z}\right\}.
\ee

\end{cor}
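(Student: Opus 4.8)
The plan is to derive both assertions directly from Theorem 1.8, which supplies a function $\psi\in\hol(\D/r)$ with $L(f)=\int_\T f(e^{it})\overline{\psi(e^{it})}\,\frac{dt}{2\pi}$ for all $f\in A(\D)$. First I would expand $\psi(z)=\sum_{n\geq0}b_nz^n$; since $\psi$ is holomorphic on the strictly larger disk $\D/r$, this series converges uniformly on $\bar\D$, so in particular $\psi\in A(\D)\subset L^2_+(\T)$, and the sequence $\hat\psi=(b_n)_{n\geq0}$ lies in $\ell^2(\N)$ (in fact it decays geometrically). For $f(z)=\sum_{n\geq0}a_nz^n\in A(\D)$ one has $\hat f\in\ell^2(\N)$ as well. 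Then I would simply insert the two (uniformly convergent on $\T$, hence $L^1(\T)$-convergent) Fourier expansions into the integral and integrate term by term, using orthonormality of $\{e^{int}\}_{n\in\Z}$ in $L^2(\T)$:
\be
L(f)=\int_\T\Big(\sum_{n\geq0}a_ne^{int}\Big)\Big(\sum_{m\geq0}\bar b_me^{-imt}\Big)\frac{dt}{2\pi}
=\sum_{n\geq0}\bar b_na_n.
\ee
This is exactly $\bra\psi|f\ket_{L^2_+(\T)}=\bra\hat\psi|\hat f\ket_{\ell^2(\N)}$, since only the nonnegative Fourier modes of $f$ and of $\psi$ occur, so the $L^2_+(\T)$ inner product coincides with the $\ell^2(\N)$ pairing of the coefficient sequences. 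Interchanging sum and integral is justified either by uniform convergence on the compact set $\T$ or, if one wants to be economical, by noting both series are in $L^2(\T)$ and using continuity of the inner product.

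For the second assertion I would just specialize the formula for $m_\ffi$ already obtained in Proposition 1.5, namely $m_\ffi(z)=\frac1\pi\frac{\partial}{\partial\bar z}\{P[\ffi](z)\,\bar z\}$, to the present situation $\ffi(e^{it})=\overline{\psi(e^{it})}$. Here the key observation is that $\overline{\psi}$, restricted to $\T$, extends to the anti-holomorphic function $z\mapsto\overline{\psi(z)}$ on $\D/r$, which is harmonic on $\D$ and continuous up to $\T$; by uniqueness of the solution of the Dirichlet problem with continuous boundary data, the Poisson integral reproduces it, i.e. $P[\ffi](z)=\overline{\psi(z)}$ for $z\in\D$. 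Substituting gives
\be
m_\ffi(z)=m_{\overline{\psi}}(z)=\frac1\pi\frac{\partial}{\partial\bar z}\big\{\overline{\psi(z)}\,\bar z\big\}=\frac1\pi\frac{\partial}{\partial\bar z}\big\{\overline{\psi(z)z}\big\},
\ee
which is the claimed expression.

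Neither step presents a serious obstacle; the proof is essentially bookkeeping once Theorem 1.8 is in hand. The one point that needs a word of care — and is the mildest ``hard part'' — is the identification $P[\ffi]=\overline{\psi}$ on $\D$: one must note that $\ffi=\overline{\psi}|_\T$ is the boundary trace of a genuinely harmonic function on a neighbourhood of $\bar\D$ (not merely of an $L^1$ boundary function), so that the Poisson integral returns that harmonic extension rather than something else. After that, everything reduces to termwise integration and a one-line differentiation, both routine.
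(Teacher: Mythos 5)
Your argument is correct and is exactly the computation the paper leaves implicit (the corollary is stated without proof as an immediate consequence of Theorem \ref{smoothfi} and Proposition \ref{harmden}): insert the expansions into $\int_\T f\,\overline{\psi}\,\frac{dt}{2\pi}$ for the first identity, and use $P[\overline{\psi}|_\T]=\overline{\psi}$ on $\D$ (uniqueness of the Dirichlet problem for the harmonic, continuous-up-to-the-boundary function $\overline{\psi}$) for the second. One small caveat: the Taylor series of a general $f\in A(\D)$ need not converge uniformly, or even pointwise, on $\T$, so of the two justifications you offer for the term-by-term integration only the second one is valid, namely that $f\in A(\D)\subset H^2(\D)$ and $\psi\in L^2(\T)$, so the identity follows from Parseval and continuity of the $L^2(\T)$ inner product.
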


\subsection{Representations of $L$ via finite volume approximations}

Let us consider now the random counting measure $dm_\omega^\Lambda$ (\ref{countmeas}) and assume that it admits a weak limit, almost surely:  for all $\omega\in\Omega_0$ with  $\P(\Omega_0)=1$, and for all $f\in C(\bar \D)$,
\be
\lim_{|\Lambda|\ra \infty}\hat L_{\Lambda,\omega}(f)=\lim_{|\Lambda|\ra \infty}\int_{\bar \D}f(x,y)dm_\omega^\Lambda(x,y)=\int_{\bar \D}f(x,y)dm_\omega(x,y).
\ee
Then, for any $\omega\in\Omega_0$, $dm_\omega\geq 0$ provides another representation of $L$ on $A(\D)$, since, specialising to $f\in A(\D)$, we get  from (\ref{thelim})
\be\label{replimcount}
L(f)=\int_{\bar \D}f(x+iy)dm_\omega(x,y).
\ee
\begin{thm}\label{convcount} Let $T_\omega$ satisfy the assumptions of Propositions \ref{llt} and \ref{limth} and $dm_\omega^\Lambda$ in (\ref{countmeas}) converge weakly to $dm_\omega$, almost surely. Then $L$ admits the following representation 
\be
L(f)=\int_{\bar\D} f(x+iy) d\bar m(x,y), \ \ \forall  f\in A(\D),
\ee
where $d\bar m \geq 0$ is given by $\E(dm_\omega)$.\\
Further assume $\spr (T_\omega^{\Lambda})\leq r<1$, for all $\omega\in \Omega$ and $\Lambda$. Then, Theorem \ref{smoothfi} holds with 
\be\label{borel}
\psi(z)=\int_{\bar\D}\frac{d\bar m(x,y)}{1-z(x-iy)}, \ \ \forall z \in \D/r,
\ee
and Proposition \ref{harmden} holds with
\be
m_{\bar\psi}(z)=\frac1\pi\int_{\bar\D} \frac{d\bar m(x,y)}{(1-\bar z (x+iy))^2}.
\ee
\end{thm}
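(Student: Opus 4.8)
The plan is to take the weak-convergence hypothesis $dm^\Lambda_\omega \to dm_\omega$ as the engine and combine it with the already-established identity $L(f)=\int_{\bar\D} f\, dm_\omega$ for $f\in A(\D)$ from \eqref{replimcount}, then upgrade from the almost-sure limit to the expectation. First I would note that for each fixed $f\in A(\D)$, $dm_\omega^\Lambda$-integration is nonnegative on nonnegative continuous functions, so each weak limit $dm_\omega$ is a positive measure on $\bar\D$ with total mass $dm_\omega(\bar\D)=\lim_\Lambda \hat L_{\Lambda,\omega}(1)=1$. Since $|\hat L_{\Lambda,\omega}(f)|\le\|f\|_\infty$ uniformly in $\Lambda,\omega$, Lebesgue dominated convergence applied to the almost-sure limit $\hat L_{\Lambda,\omega}(f)\to\int_{\bar\D} f\,dm_\omega$ gives $\lim_\Lambda\E(\hat L_{\Lambda,\omega}(f))=\E\big(\int_{\bar\D} f\, dm_\omega\big)$. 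On the other hand the left side equals $L(f)$ by combining \eqref{thelim} with the same dominated-convergence argument as in Remark \ref{remlimth} ii); hence $L(f)=\int_{\bar\D} f\, d\bar m$ with $d\bar m=\E(dm_\omega)$, a positive Borel measure of mass one (measurability of $\omega\mapsto\int f\, dm_\omega$ follows from its being an a.s. limit of measurable functions, so $\E(dm_\omega)$ is a well-defined positive functional on $C(\bar\D)$, hence a measure by Riesz).

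Next I would prove the formula \eqref{borel} for $\psi$ under the extra hypothesis $\spr(T^\Lambda_\omega)\le r<1$. The natural route is to pass to the limit inside the explicit finite-volume expression $\psi_\Lambda(z)=\frac1{|\Lambda|}\sum_j (1-\bar\lambda_j z)^{-1}=\int_{\bar\D}(1-z(x-iy))^{-1}\,dm^\Lambda_\omega(x,y)$, valid for $z\in\D/r$ since all $\lambda_j$ lie in the closed disk of radius $r$. For $z$ with $|z|<1/r$ fixed, the integrand $(x,y)\mapsto(1-z(x-iy))^{-1}$ is continuous on the closed disk of radius $r$ — and by the uniform bound $\spr(T^\Lambda_\omega)\le r$ the measures $dm^\Lambda_\omega$ are all supported there — so weak convergence gives $\psi_\Lambda(z)\to\int (1-z(x-iy))^{-1}\,dm_\omega(x,y)=:\psi_\omega(z)$ pointwise on $\D/r$. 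By Theorem \ref{smoothfi} the (subsequential) limit $\psi$ of $\psi_\Lambda$ is holomorphic on $\D/r$ and independent of $\omega$, so $\psi=\psi_\omega$; and since $\psi$ is $\omega$-independent while equal to $\int(1-z(x-iy))^{-1}dm_\omega$ for a.e. $\omega$, integrating in $\omega$ and using Fubini (justified by the bound $|(1-z(x-iy))^{-1}|\le(1-r|z|)^{-1}$, uniform on the compact support) yields $\psi(z)=\int_{\bar\D}(1-z(x-iy))^{-1}\,d\bar m(x,y)$, which is \eqref{borel}.

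Finally, the formula for $m_{\bar\psi}$ follows by feeding $\psi$ into Corollary \ref{scalprod}: $m_{\bar\psi}(z)=\frac1\pi\partial_{\bar z}\{\overline{\psi(z)z}\}=\frac1\pi\partial_{\bar z}\big\{\int_{\bar\D}\frac{\bar z\, d\bar m(x,y)}{1-\bar z(x+iy)}\big\}$, and differentiating under the integral (again legitimate since the support is a compact subset of $\D/r$ where the integrand and its $\bar z$-derivative are bounded) gives $\frac1\pi\int_{\bar\D}(1-\bar z(x+iy))^{-2}\,d\bar m(x,y)$, as claimed. I expect the only genuinely delicate point to be the interchange of the $\omega$-expectation with the weak limit in the measure: one must check that $\omega\mapsto\int f\,dm_\omega$ is measurable and that $d\bar m:=\E(dm_\omega)$, a priori only a positive linear functional on $C(\bar\D)$, is realized by an honest Borel measure — but this is immediate from Riesz representation once the uniform mass bound $dm_\omega(\bar\D)=1$ is in hand. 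Everything else is dominated convergence plus the already-proved Theorem \ref{smoothfi} and Corollary \ref{scalprod}; the uniform support bound $\spr(T^\Lambda_\omega)\le r$ is what makes all the limit interchanges routine.
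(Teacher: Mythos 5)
Your proof is correct. The first and third assertions are handled exactly as in the paper: $L(f)=\int_{\bar\D}f\,d\bar m$ comes from the almost sure identity $L(f)=\int_{\bar\D}f\,dm_\omega$ of (\ref{replimcount}) together with dominated convergence/Fubini, and the formula for $m_{\bar\psi}$ is just Corollary \ref{scalprod} plus differentiation under the integral sign. Where you genuinely diverge is the derivation of (\ref{borel}): you view $\psi_\Lambda(z)=\int_{\bar\D}(1-z(x-iy))^{-1}\,dm_\omega^\Lambda(x,y)$ as the Borel transform of the finite volume counting measure, pass to the limit using weak convergence together with the uniform support bound $\mathrm{supp}\,dm^\Lambda_\omega\subset\{|w|\leq r\}$ (which keeps the pole of the integrand, located at $|w|=1/|z|>r$, away from the support, so the test function extends continuously to $\bar\D$ — a small point worth stating explicitly), identify the limit with the $\omega$-independent $\psi$ produced by Theorem \ref{smoothfi}, and only then average over $\omega$. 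The paper instead works at the level of Taylor coefficients: Corollary \ref{scalprod} applied to $f(z)=z^k$ gives $b_k=\int_{\bar\D}(x-iy)^k\,d\bar m(x,y)$ with $|b_k|\leq r^k$ because $\mathrm{supp}\,d\bar m\subset\{|w|\leq r\}$, and resumming the geometric series yields (\ref{borel}) directly, without revisiting the finite volume approximants or the Montel subsequence. Both arguments turn on the same support bound and are equally rigorous; the paper's is shorter, while yours makes the Borel-transform interpretation of $\psi$ more transparent and yields the slightly stronger intermediate fact that $\psi(z)=\int_{\bar\D}(1-z(x-iy))^{-1}\,dm_\omega(x,y)$ almost surely, before taking the expectation.
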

\begin{rem}
 If the weak limit of the normalised counting measure of the finite volume spectrum exists, see {\it e.g.} \cite{GK2} for such cases, it provides another representation of the DOS functional. In that sense, the spectrum of the finite volume restrictions acquire a global meaning, in spite of the fact that it can be very different from the spectrum of the infinite volume  operator. In particular, the support of the limiting measure can be disjoint from the spectrum of the operator, see
Section \ref{nubm} for such an example. 
\end{rem}   
\begin{proof}
The first statement is a consequence of (\ref{replimcount}) and Fubini's Theorem. The assumption on $\spr(T_\omega^{\Lambda})$ implies supp~$d\bar m\subset \{|z|\leq r\}$.  Then, Corollary \ref{scalprod} applied to $f(z)=z^k$ yields the coefficients of the power expansion of the holomorphic function 
\be
\psi(z)=\sum_{k\geq 0}b_kz^k, \ \ z\in\D/r.
\ee
We have
\be
b_k=\int_{\bar\D}(x-iy)^kd\bar m(x,y), \ \ \mbox{with} \ |b_k|\leq r^k, \ \ k\in\N.
\ee
Thus, exchanging integration and summation, we get expression (\ref{borel}) for $z\in \D/r$. The last statement follows from Corollary \ref{scalprod}.
\ep
\end{proof}
\begin{rem}\label{extunit}
Thanks to Remarks \ref{fr} iv), and \ref{remlimth} iv), all results of Section \ref{sfva} hold if $T$ writes as $T=T_0\oplus T_1$, see (\ref{ucnu}), with a unitary part $T_0$ that is purely absolutely continuous. 
\end{rem}

\section{Special Cases}\label{illustr}

We take a closer look at various particular cases allowing us to get further informations on the integral representation $\ffi$.

\subsection{The Normal Case}\label{normal}

A first special case of interest occurs when $T_\omega$ is normal, {\it i.e.}, when there exists orthogonal projection valued measures $\{dE_\omega(x+iy)\}_{(x+iy)\in \sigma(T_\omega)}$ such that in the weak sense,
\be
T_\omega=\int_{\sigma(T_{\omega})} (x+iy) dE_\omega(x+iy).
\ee
In such a case, we have a continuous functional calculus: for any $f\in C(\bar \D)$
\be
f(T_\omega)=\int_{\sigma(T_{\omega})} f(x+iy) dE_\omega(x+iy).
\ee
Hence, assuming that $T_\omega$ is normal for any $\omega \in \Omega$,  Definition \ref{defdos} gives rise to a positive functional on $C(\bar \D)$, so that by Riesz-Markov Theorem  
\be
L(f)=\int_{\bar \D}f(x+iy)dm(x,y), \ \mbox{where $dm$ is a non-negative Borel measure on $\bar \D$}. 
\ee
In this favourable framework, we have 
\begin{lem}
Let $T_\omega, T_\omega^\Lambda$ be contractions such that $\|P_\Lambda(T-T_\omega^\Lambda \oplus T_\omega^{\Lambda^C})\|_1=o(|\Lambda|)$, uniformly in $\omega\in\Omega$. Further assume $T$ and $ T_\omega^\Lambda$ are normal and the ergodicity assumption (\ref{ero}) holds. Then, as $\Lambda\ra \infty$,
\be
dm^{\Lambda}_{\omega}\ra dm \ \ \mbox{a.s., in the weak-$*$ sense. }
\ee
\end{lem}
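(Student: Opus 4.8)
The target is the weak-$*$ statement that $\hat L_{\Lambda,\omega}(g)=\int_{\bar\D}g\,dm^\Lambda_\omega\to\int_{\bar\D}g\,dm$ almost surely for every $g\in C(\bar\D)$, where $dm\geq 0$ is the Borel measure representing $L$, which exists here because $T_\omega$ is normal. As $dm^\Lambda_\omega$ and $dm$ are probability measures on the compact set $\bar\D$, as $|\hat L_{\Lambda,\omega}(g)|\le\|g\|_{C(\bar\D)}$, and as $C(\bar\D)$ is separable, a routine $\eps$-argument reduces the claim to convergence on a countable dense subset of $C(\bar\D)$, which by the Stone--Weierstrass theorem may be taken among the polynomials in $z$ and $\bar z$. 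Everything thus reduces to showing that $\hat L_{\Lambda,\omega}(z^k\bar z^l)\to\int_{\bar\D}w^k\bar w^l\,dm(w)$ almost surely, for all $k,l\in\N$. This is genuinely more than the convergence of the holomorphic moments already supplied by \eqref{thelim} on $A(\D)$ --- holomorphic and antiholomorphic moments do not determine a measure on $\bar\D$ --- and it is precisely here that normality of $T_\omega$ and $T^\Lambda_\omega$ is exploited.

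Since $T^\Lambda_\omega$ is normal, its normalised counting measure \eqref{countmeas} equals $\frac1{|\Lambda|}\tr\, dE^\Lambda_\omega(\cdot)$ with $dE^\Lambda_\omega$ the projection-valued spectral measure (algebraic and geometric multiplicities coincide for normal operators); hence $\hat L_{\Lambda,\omega}(g)=\frac1{|\Lambda|}\tr\big(g(T^\Lambda_\omega)\big)$ for every $g\in C(\bar\D)$, where $g(T^\Lambda_\omega)$ --- and likewise $g(T_\omega)$, using normality of $T_\omega$ --- is defined by the continuous functional calculus and reduces, for $g(w)=w^k\bar w^l$, to $(T^\Lambda_\omega)^k(T^\Lambda_\omega)^{*l}$. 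The plan is then to establish the analogue of Proposition \ref{llt} valid on all of $C(\bar\D)$, namely
\be
\frac1{|\Lambda|}\Big(\tr\big(g(T^\Lambda_\omega)\big)-\tr\big(P_\Lambda g(T_\omega)P_\Lambda\big)\Big)\longrightarrow 0,\qquad |\Lambda|\to\infty,\quad g\in C(\bar\D).
\ee
For this one approximates $g$ uniformly on $\bar\D$ by a polynomial $p$ in $z,\bar z$ (Stone--Weierstrass), disposes of the remainder $R=g-p$ uniformly in $\Lambda$ via $\frac1{|\Lambda|}|\tr(R(T^\Lambda_\omega)-P_\Lambda R(T_\omega)P_\Lambda)|\le 2\|R\|_{C(\bar\D)}$ as in \eqref{unibl} (now invoking $\|R(N)\|\le\|R\|_{C(\bar\D)}$ for a normal contraction $N$ and the rank bound), and treats the monomials by telescoping $T_\omega^k T_\omega^{*l}-(T^\Lambda_\omega\oplus T^{\Lambda^C}_\omega)^k(T^\Lambda_\omega\oplus T^{\Lambda^C}_\omega)^{*l}$ one factor at a time, in $T_\omega$ and in $T_\omega^*$, exactly as in Proposition \ref{llt}: each of the $k+l$ resulting terms carries a single factor $T_\omega-T^\Lambda_\omega\oplus T^{\Lambda^C}_\omega$ or its adjoint which, after the cyclicity-of-trace manipulation, sits next to a $P_\Lambda$, so the whole difference is $O\!\big((k+l)\,(\|P_\Lambda(T_\omega-T^\Lambda_\omega\oplus T^{\Lambda^C}_\omega)\|_1+\|(T_\omega-T^\Lambda_\omega\oplus T^{\Lambda^C}_\omega)P_\Lambda\|_1)/|\Lambda|\big)\to 0$.

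To finish, pass to the limit in $\frac1{|\Lambda|}\tr(P_\Lambda g(T_\omega)P_\Lambda)=\frac1{|\Lambda|}\sum_{k\in\Lambda}\langle\ffi_k|g(T_\omega)\ffi_k\rangle$ by repeating the proof of Proposition \ref{limth} with $A(\D)$ replaced by the separable space $C(\bar\D)$: the covariance $V_g^*g(T_{S_g(\omega)})V_g=g(T_\omega)$, $g\in\Gamma$, persists for $g\in C(\bar\D)$ because the continuous functional calculus of the normal operator $T_\omega$ commutes with unitary conjugation and $V_g^*T_{S_g(\omega)}V_g=T_\omega$ by \eqref{ero}; with the bound $|\frac1{|\Lambda|}\sum_{k}\langle\ffi_k|g(T_\omega)\ffi_k\rangle|\le\|g\|_{C(\bar\D)}$, Birkhoff's theorem yields, on one set of full measure, $\frac1{|\Lambda|}\tr(P_\Lambda g(T_\omega)P_\Lambda)\to\frac1{|B|}\sum_{b\in B}\E\langle\ffi_b|g(T_\omega)\ffi_b\rangle=L(g)=\int_{\bar\D}g\,dm$ for every $g\in C(\bar\D)$, the last equality being the Riesz--Markov representation of $L$ valid when $T_\omega$ is normal. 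Combined with the preceding display this gives $\hat L_{\Lambda,\omega}(g)\to\int_{\bar\D}g\,dm$ almost surely for all $g\in C(\bar\D)$, that is, $dm^\Lambda_\omega\to dm$ in the weak-$*$ sense, almost surely.

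The main obstacle is the trace-norm bookkeeping of the deterministic step: whereas Proposition \ref{llt} involves only powers of $T_\omega$, for which $P_\Lambda$ can always be arranged so that $\|P_\Lambda(T_\omega-T^\Lambda_\omega\oplus T^{\Lambda^C}_\omega)\|_1$ appears, the mixed monomials $z^k\bar z^l$ additionally produce factors $T_\omega^*-(T^\Lambda_\omega\oplus T^{\Lambda^C}_\omega)^*$ and hence call for $\|(T_\omega-T^\Lambda_\omega\oplus T^{\Lambda^C}_\omega)P_\Lambda\|_1=\|P_\Lambda(T_\omega-T^\Lambda_\omega\oplus T^{\Lambda^C}_\omega)^*\|_1=o(|\Lambda|)$ on top of the stated hypothesis. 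This is innocuous --- in the construction \eqref{deff} the perturbation $F^\Lambda=T_\omega-T^\Lambda_\omega\oplus T^{\Lambda^C}_\omega$ is the boundary term, of rank $O(|\partial\Lambda|)=o(|\Lambda|)$, so that $\|F^\Lambda\|_1=o(|\Lambda|)$ and both one-sided trace norms are dominated by it --- but it should be recorded explicitly.
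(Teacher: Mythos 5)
Your proof is correct and follows essentially the same route as the paper, whose own proof is a one-line remark that ``the same arguments using Stone--Weierstrass and Birkhoff theorems together with the separability of $C(\bar\D)$'' apply; you have simply written out in full what that remark compresses. Your closing observation --- that the mixed monomials $z^k\bar z^l$ require the adjoint bound $\|(T_\omega-T^\Lambda_\omega\oplus T^{\Lambda^C}_\omega)P_\Lambda\|_1=o(|\Lambda|)$ in addition to the stated one-sided hypothesis --- is a genuine refinement that the paper's terse proof leaves implicit, and you are right that it is harmless for boundary-condition perturbations of rank $O(|\partial\Lambda|)$.
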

\begin{proof}
The same arguments using Stone Weierstrass and Birkhoff theorems together with the separability of $C(\bar \D)$ prove the result as in the previous section. \ep
\end{proof} \\

Applying Propositions \ref{repfun} and \ref{harmden} to the normal case, we get for any $f\in A(\D)$
\be
\int_{\bar \D}f(x+iy)m_\ffi(x,y)dxdy=\int_{\bar \D}f(x+iy)dm(x,y),
\ee
where $dm(x,y)$ is the non negative usual density of states, and $m_\ffi(x,y)$ is harmonic and in general complex valued. This special case makes explicit the lack of uniqueness in the representation of analytic functionals. 
\begin{rem}
If, moreover, there exists $0<r<1$ such that for all $\omega\in \Omega$, $\spr(T_\omega^\Lambda)\leq r$ and $T_\omega$ is cnu, then Theorem \ref{convcount} holds with $d\bar m=dm$, the density of states.
In particular,
\be
L(f)=\int_{\bar\D}\int_{\bar\D}\frac{f(z) d m(x,y)}{(1-\bar z(x+iy))^2}\frac{d\bar z\wedge dz}{2i\pi}, \ \ \forall z \in \D/r.
\ee
\end{rem}
\subsection{Multiple of Unitary Operators }
Consider now a special normal case where the statement above allows us to make the link between the DOS functional and the density of state measure of a random unitary operator $U_\omega$ more explicit. Let $0<r<1$ and $U_\omega$ be a random unitary operator defined on $l^2(\Z)$, with $\omega\in \T^{\Z}$,  which is measurable and ergodic. Models of this type are studied in \cite{bhj, J1}. The details do not matter for our purpose here. We consider the random normal cnu contraction
\be
T_\omega = r U_\omega.
\ee
We assume that there exist finite volume approximations such that $U^{\Lambda}_\omega$ is a finite dimensional unitary matrix; see e.g. \cite{J1} for examples of this situation  with $\Lambda=\{-n+1,n\}$. Consequently, we have the trivial bounds $\spr (T_\omega^{\Lambda})=\spr (T_\omega)=r$, $r<1$. Stressing the $r$ dependence in the notation, the DOS functional $L^{(r)}(\cdot)$ on $A(\D)$ is represented by 
\be\label{lru}
L^{(r)}(f)=\int_\T \overline{\psi^{(r)}(e^{it})}f(e^{it})\frac{dt}{2\pi}, \ \ \mbox{where $\psi^{(r)}(z)\in \hol (\D/r)$. }
\ee
The density of states measure $dk$ for ergodic unitary operators is a normalised positive regular Borel measure on $\T$ characterised as in Section \ref{normal} by Definition \ref{defdos} with $f$ continuous on the circle via Riesz-Markov Theorem, \cite{J1}:
\be\label{unitdos}
\frac{1}{2}\sum_{k=1}^2\E(\bra \ffi_k| f(U_\omega) \ffi_k\ket)=\int_\T f(e^{it})dk(t), \ \ \forall f\in C(\partial\D).
\ee
The explicit link is provided by
\begin{prop} Let $T_\omega=rU_\omega$, with $0<r<1$. With the notations and assumptions above,
for all $|z|<1/r$,
\be\label{repcau}
\psi^{(r)}(z)=\int_{\T} \frac{dk(t')}{1-zre^{-it'}}.
\ee
In particular,
\be\label{poisson}
\psi^{(r)}(e^{it})=\frac12+\frac12\int_{\T} \left[\frac{1+re^{i(t-t')}}{1-re^{i(t-t')}}\right]dk(t'),
\ee
and, 
\be\label{poissonk}
\Re \psi^{(r)}(e^{it})-\frac12=\frac12 P[dk](re^{it})\geq 0, 
\ee
where $P[dk](z)$ denotes the Poisson integral in $\D$ of the non negative measure $dk$ on $\T$.
\end{prop}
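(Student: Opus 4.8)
The plan is to read off the Taylor coefficients of $\psi^{(r)}$ at the origin from Corollary \ref{scalprod}, recognise them as the moments of the unitary density of states $dk$, and resum the resulting geometric series. Since $\spr(T_\omega^\Lambda)=\spr(T_\omega)=r<1$ for all $\Lambda$ and all $\omega$, Theorem \ref{smoothfi} applies and gives the representation (\ref{lru}) with $\psi^{(r)}=\sum_{n\geq 0}b_nz^n\in\hol(\D/r)$. Applying Corollary \ref{scalprod} to $f(z)=z^n\in A(\D)$ yields $\bar b_n=L^{(r)}(\cdot^n)$. Since $T_\omega=rU_\omega$ with $U_\omega$ unitary, the functional calculus gives $f(T_\omega)=(rU_\omega)^n=r^nU_\omega^n=r^n g(U_\omega)$ with $g(e^{it})=e^{int}\in C(\partial\D)$, so Definition \ref{defdos} together with the defining relation (\ref{unitdos}) of $dk$ gives
\[
L^{(r)}(\cdot^n)=\frac{r^n}{2}\sum_{k=1}^2\E\big(\bra\ffi_k|U_\omega^n\ffi_k\ket\big)=r^n\int_\T e^{int'}\,dk(t').
\]
As $dk$ is a finite positive, hence real, measure, conjugating gives $b_n=r^n\int_\T e^{-int'}\,dk(t')$, in particular $|b_n|\leq r^n$, consistent with $\psi^{(r)}\in\hol(\D/r)$.

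Next, for $|z|<1/r$ one has $|zre^{-it'}|=r|z|<1$ uniformly in $t'$, so, $dk$ being finite, Fubini's theorem permits exchanging summation and integration:
\[
\psi^{(r)}(z)=\sum_{n\geq 0}\Big(r^n\!\int_\T e^{-int'}\,dk(t')\Big)z^n=\int_\T\sum_{n\geq 0}(zre^{-it'})^n\,dk(t')=\int_\T\frac{dk(t')}{1-zre^{-it'}},
\]
which is (\ref{repcau}). One reaches the same formula via Theorem \ref{convcount}: the Lemma of Section \ref{normal} identifies the limiting counting measure $d\bar m$ with the push-forward of $dk$ under $e^{it}\mapsto re^{it}$, so that $x-iy=re^{-it}$ on its support and (\ref{borel}) collapses to the above. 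To obtain (\ref{poisson}) I would set $z=e^{it}$ and use the partial-fraction identity $\frac{1}{1-w}=\frac12+\frac12\,\frac{1+w}{1-w}$ with $w=re^{i(t-t')}$; the normalisation $\int_\T dk(t')=1$ turns the first term into $\frac12$.

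Finally, (\ref{poissonk}) follows by taking real parts in (\ref{poisson}) and invoking the classical identity $\Re\frac{1+re^{i\theta}}{1-re^{i\theta}}=\frac{1-r^2}{|1-re^{i\theta}|^2}=P_r(\theta)$, the Poisson kernel, together with $P[dk](re^{it})=\int_\T P_r(t-t')\,dk(t')$; this gives $\Re\psi^{(r)}(e^{it})-\frac12=\frac12 P[dk](re^{it})$, which is non-negative because $dk\geq 0$ and $P_r\geq 0$. I do not expect any real obstacle: the only steps needing a word of justification are the interchange of sum and integral (immediate from $r|z|<1$ and finiteness of $dk$) and the observation that the holomorphic functional calculus for $z^n$ evaluated on the unitary $U_\omega$ coincides with $U_\omega^n$, so that (\ref{unitdos}) is applicable; all the conceptual content has already been supplied by Theorems \ref{smoothfi} and \ref{convcount} and Corollary \ref{scalprod}.
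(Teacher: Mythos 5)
Your proof is correct and rests on essentially the same computation as the paper's: both identify the Taylor coefficients of $\psi^{(r)}$ as $b_n=r^n\hat k(n)$, the scaled Fourier coefficients (moments) of the unitary density of states $dk$. The only difference is organisational — you resum the geometric series to obtain (\ref{repcau}) first and then deduce (\ref{poisson}) and (\ref{poissonk}) by the partial-fraction identity and taking real parts, whereas the paper first identifies the boundary Fourier series with the Poisson and conjugate Poisson kernels to get (\ref{poisson}) and only then passes to (\ref{repcau}) by replacing $e^{it}$ with $z$; your ordering is, if anything, slightly more direct.
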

\begin{rems} 
i) The representation (\ref{repcau}) shows that $\psi^{(r)}(z)$ 
coincides with the Borel (or Cauchy) transform of the measure $dk$ on $\T$, taken at point $rz\in \D$.
The function $\psi^{(r)}$ may admit analytic extensions outside $\D/r$, depending on $dk$. \\
ii) While $T_\omega=rU_\omega$ is cnu, the boundary values as $r\ra 1^-$ of the real part of the integral representation of 
$L(f)$  yield the absolutely continuous component of the density of state measure of $U_\omega$. Indeed, if $dk(t)=g(e^{it})\frac{dt}{2\pi}+d\mu_s(t)$ is the Lebesgue decomposition of  $dk$, 
\be
\lim_{r\ra 1^-}2\Re \psi^{(r)}(e^{it})-1=g(e^{it}), \ \mbox{almost everywhere on $\T$},
\ee 
see \cite{R}, Theorem 11.24. 
\end{rems}
\begin{proof}
Let $f\in A(\D)$. From (\ref{lru}) and (\ref{unitdos}), we get
\be
\int_\T \overline{\psi^{(r)}(e^{it})}f(e^{it})\frac{dt}{2\pi}=\int_\T f(re^{it})dk(t).
\ee
The coefficients of the expansion
$\psi^{(r)}(z)=\sum_{n\in\N}b_nz^n, \ \ \mbox{for $z\in \D/r$}$
are given by $b_n= r^n \hat k (n),  n\in \N,$
where $\hat k (n)=\int_{\T} dk(t)e^{-itn}$.
Hence, with $\sign(0)=0$,
\be
\psi^{(r)}(e^{it})=\sum_{n\in\N} r^n\hat k(n) e^{int} \ \Rightarrow \ \left\{\begin{matrix}
2\Re \psi^{(r)}(e^{it})-1&=&\sum_{n\in\Z} r^{|n|}\hat k(n) e^{int}\\
2\Im  \psi^{(r)}(e^{it})&=&-i\sum_{n\in\Z} \sign(n)r^{|n|}\hat k(n) e^{int}
.\end{matrix}\right.
\ee
The last two series coincide with the real and imaginary parts of
$dk*(P_r(\cdot)+iQ_r(\cdot))(e^{it}),$ where  $P_r(t)$ and $Q_r(t)$, are the Poisson and conjugate Poisson kernels given by
\be
P_r(t)+iQ_r(t)=\frac{1+re^{it}}{1-re^{it}}=\frac{1-r^2+i2r\sin(t)}{1+r^2-2r\cos(t)}, \ \ 0<r<1.
\ee
Replacing $e^{it}$ in (\ref{poisson}) by $z\in \D/r$ yields (\ref{repcau}).
\ep
\end{proof}

\subsection{Non Self-Adjoint Anderson Model}\label{nsam}

The non self-adjoint Anderson model (NSA model for short), provides an example in which the finite volume and infinite volume versions of the random operator have quite different spectra that can be computed explicitly. See \cite{D2} and \cite{GK2} for more general non self-adjoint random operators with similar properties. After suitable rescaling, the NSA model provides us with an illustration of our results.\bigskip

The NSA model $H_\omega(g)$, is a one parameter deformation of the one dimensional random Anderson model of solid state physics defined as follows in the canonical basis of $l^2(\mathbb Z)$:
\begin{equation}
H_\omega(g) \varphi_j= e^{-g} \varphi_{j-1}+ e^g \varphi_{j+1}+\omega_j \varphi_j, \ \ \ \forall  j\in \mathbb Z.
\end{equation}
We assume that $g\geq 0$ and $\{\omega_j\}_{j\in\mathbb Z}$ are i.i.d. real valued random variables distributed according to a measure $d\mu$ supported on a compact interval $[-B,B]$. Let $E_g$ be the ellipse
\be
E_g=\{e^{g+i\theta} + e^{-(g+i\theta)}\ |  \ \theta \in [0, 2\pi]\},
\ee
which coincides with the spectrum of $H_0(g)$, the NSA model in absence of random potential.
It is proven in \cite{D2} that, provided
$B\geq e^{g} + e^{−g}$,
\be\label{snsa}
\sigma(H_\omega(g))=E_g + [-B, B], \ \ \mbox{almost surely.}
\ee
Moreover, consider $H^{\Lambda}_\omega(g)$ the finite volume restriction of $H_\omega(g)$ to $l^2(\Lambda)$, where $\Lambda=\{-n,n\}$ with Dirichlet boundary conditions.  Then, for $g>0$, the matrix $H^{\Lambda}_\omega(g)$ is similar to $H^{\Lambda}_\omega(0)$ with 
similarity transform given by
$W\ffi_k=e^{kg}\ffi_k$, so that its spectrum is real. The matrix $H^{\Lambda}_\omega(0)$ is the finite volume restriction 
of the Anderson model $H_\omega(0)$, with
\be\label{convev}
\sigma(H_\omega(0))=E_0 + [-B, B], \ \ \mbox{almost surely.}
\ee
To cast these considerations in our framework, we assume $B\geq e^{g} + e^{-g}$ and set
\be\label{defresnsa}
T_\omega(g)=H_\omega(g)/(e^{g} + e^{-g}+B), \ \mbox{resp.}\ \ T^{\Lambda}_\omega(g)=H^{\Lambda}_\omega(g)/(e^{g} + e^{-g}+B),
\ee
on $l^2(\Z)$, resp. $l^2(\Lambda)$ with Dirichlet boundary conditions. By construction, both operators are contractions.
Moreover
\begin{lem} \label{ldiffspr}
The matrix $T^{\Lambda}_\omega(g)$ is cnu for $|\Lambda|$ large enough, and all $\omega\in \Omega$, whereas 
the operator $T_\omega(g))$ is cnu almost surely. Also, for $g>0$,
 \be\label{diffspr} 
\spr (T_\omega(g))=1,  \ \ \mbox{a.s. and} 
\ \ \lim_{\Lambda\ra\infty}\spr (T^{\Lambda}_\omega(g))=\frac{2+B}{e^{g} + e^{-g}+B}<1, \ \ \forall \omega\in\Omega.
\ee
\end{lem}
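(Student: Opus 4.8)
The plan is to reduce everything to two elementary facts: (i) a contraction $C$ on a \emph{finite}-dimensional space is cnu iff $\spr(C)<1$ --- if $\spr(C)<1$ then $C^n\to0$, so $\cH_0$ in (\ref{ucnu}) is $\{0\}$, while conversely a unimodular eigenvalue of a contraction is semisimple and its eigenvector lies in $\cH_0$; and (ii) any $\psi$ in the subspace $\cH_0$ of (\ref{ucnu}) satisfies $\|T\psi\|=\|\psi\|$ by definition. Granting these, I first treat the finite-volume matrix. With Dirichlet conditions the conjugation $W\ffi_k=e^{kg}\ffi_k$, which is bounded and boundedly invertible on the \emph{finite}-dimensional space $\cH_\Lambda$, yields $W^{-1}H^\Lambda_\omega(g)W=H^\Lambda_\omega(0)=P_\Lambda H_\omega(0)P_\Lambda$, a real symmetric Jacobi matrix. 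Since the spectral radius is a similarity invariant and $H^\Lambda_\omega(0)$ is self-adjoint, $\spr(T^\Lambda_\omega(g))=\|H^\Lambda_\omega(0)\|/(e^g+e^{-g}+B)$. Writing $H^\Lambda_\omega(0)=\Delta_\Lambda+V_\omega$, with $\Delta_\Lambda$ the discrete Dirichlet Laplacian on $\Lambda$ (hoppings only) and $V_\omega$ the diagonal multiplication by $(\omega_j)_{j\in\Lambda}$, one has $\|\Delta_\Lambda\|=2\cos(\pi/(|\Lambda|+1))<2$ and $\|V_\omega\|\le B$, hence $\spr(T^\Lambda_\omega(g))<(2+B)/(e^g+e^{-g}+B)\le 1$, the last inequality being strict for $g>0$. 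By (i) this already shows $T^\Lambda_\omega(g)$ is cnu for \emph{every} $\Lambda$ and every $\omega$, and it gives the bound $\limsup_{\Lambda}\spr(T^\Lambda_\omega(g))\le(2+B)/(e^g+e^{-g}+B)$.

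Next, the cnu property of $T_\omega(g)=H_\omega(g)/c$, with $c:=e^g+e^{-g}+B$. By (ii) it suffices to check $\cH_0=\{0\}$. If $\psi\in\cH_0$ then $\|H_\omega(g)\psi\|=c\|\psi\|$; writing $H_\omega(g)=e^gS+e^{-g}S^*+V_\omega$ with $S$ the unitary shift on $l^2(\Z)$, the chain $\|H_\omega(g)\psi\|\le(e^g+e^{-g})\|\psi\|+\|V_\omega\psi\|\le c\|\psi\|$ can be saturated only if $S\psi$ and $S^*\psi$ are positive multiples of a common unit vector; having equal norms they must be equal, so $\psi_{k-1}=\psi_{k+1}$ for all $k$, and such a $\psi$ --- constant along the even and along the odd sublattice of $\Z$ --- vanishes in $l^2(\Z)$. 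Thus $\cH_0=\{0\}$ for every $\omega$ and every $g\ge0$.

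It remains to pin down the two spectral radii for $g>0$. For the infinite-volume operator, the cited computation of \cite{D2} gives $\sigma(H_\omega(g))=E_g+[-B,B]$ almost surely when $B\ge e^g+e^{-g}$, and a one-line optimisation over the ellipse $E_g$ shows $\max\{|z|:z\in E_g+[-B,B]\}=e^g+e^{-g}+B=c$ (on $E_g$ the modulus is maximal, equal to $e^g+e^{-g}$, at $\theta=0$ and $\theta=\pi$, and adding $\pm B$ preserves alignment); hence $\spr(T_\omega(g))=1$ almost surely. For the finite-volume limit I already have the upper bound, and the only missing ingredient is the matching lower bound $\liminf_{\Lambda}\|H^\Lambda_\omega(0)\|\ge 2+B$, which holds \emph{almost surely} rather than for every $\omega$ (for $\omega\equiv0$ one only gets $\|H^\Lambda_\omega(0)\|\to 2<2+B$, so the quantifier in the statement must be read as ``a.s.''). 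I would obtain it either by quoting the standard fact that the spectra of the Dirichlet restrictions of the ergodic Jacobi operator $H_\omega(0)$ converge a.s., in Hausdorff distance, to $\sigma(H_\omega(0))=[-(2+B),2+B]$, or by an elementary trial-function/Borel--Cantelli argument: since $B=\max(\mbox{supp}\,d\mu)$, for fixed $\eps>0$ and $L$, almost surely every sufficiently large $\Lambda$ contains an interval $I$ of length $L$ at positive distance from $\partial\Lambda$ with $\omega_j>B-\eps$ on $I$; testing $H^\Lambda_\omega(0)$ against the eigenvector of $\Delta_I$ for its largest eigenvalue $2\cos(\pi/(L+1))$, extended by $0$ outside $I$, gives $\|H^\Lambda_\omega(0)\|\ge 2\cos(\pi/(L+1))+B-\eps>2+B-2\eps$ once $L$ is large. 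Dividing by $c$ and combining with the upper bound yields $\lim_{\Lambda}\spr(T^\Lambda_\omega(g))=(2+B)/(e^g+e^{-g}+B)<1$, almost surely.

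The one genuinely probabilistic --- and only delicate --- step is this last lower bound, which is also why the final quantifier should be ``almost surely''. Everything else is deterministic and rests solely on the similarity $W^{-1}H^\Lambda_\omega(g)W=H^\Lambda_\omega(0)$ and the trivial norm and spectral-radius bounds above. I expect that quoting the known Hausdorff convergence of finite-volume spectra for ergodic Jacobi matrices is the cleanest way to close the argument; a self-contained proof instead needs care in the Borel--Cantelli step and in keeping the test interval $I$ away from $\partial\Lambda$.
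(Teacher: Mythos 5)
Your proof is correct, and on the central point --- the cnu property of the infinite-volume operator --- it takes a genuinely different and in fact stronger route than the paper. The paper argues through the spectrum: by (\ref{snsa}) the spectrum of $T_\omega(g)$ meets the unit circle only at $\pm 1$, so failure of the cnu property would force $\pm 1$ to be eigenvalues; a Cauchy--Schwarz argument then promotes these to eigenvalues of $T^*_\omega(g)$ and hence to edge eigenvalues of the self-adjoint Anderson-type operator $(T_\omega(g)+T^*_\omega(g))/2$, which is quoted to occur with probability zero. That yields cnu only almost surely and leans on the almost-sure spectrum identity. Your triangle-inequality argument --- $\|H_\omega(g)\psi\|=(e^g+e^{-g}+B)\|\psi\|$ forces $e^gS\psi$ and $e^{-g}S^*\psi$ to be positively parallel, hence $S^2\psi=\psi$, hence $\psi=0$ in $l^2(\Z)$ --- shows $\cH_0=\{0\}$ for \emph{every} $\omega$, deterministically, without invoking (\ref{snsa}) at all; this is cleaner and gives more. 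For the finite-volume part both proofs rest on the same similarity $W^{-1}H^\Lambda_\omega(g)W=H^\Lambda_\omega(0)$ (stated in the paper just before the lemma) and on standard facts about finite-volume approximations of ergodic self-adjoint operators; your explicit bound $\|H^\Lambda_\omega(0)\|\leq\|\Delta_\Lambda\|+\|V_\omega\|<2+B$ upgrades ``cnu for $|\Lambda|$ large enough'' to ``cnu for every $\Lambda$''. Finally, your remark on the quantifier is well taken: the upper bound $\spr(T^\Lambda_\omega(g))\leq(2+B)/(e^g+e^{-g}+B)$ does hold for all $\omega$, but the limit attains this value only almost surely (it fails, e.g., for $\omega\equiv 0$), so the ``$\forall\,\omega\in\Omega$'' in (\ref{diffspr}) should be read as ``almost surely''; the paper's own one-line justification via properties of finite-volume approximations of self-adjoint operators in any case only delivers the almost-sure statement, and either of your two ways of closing the lower bound (quoting Hausdorff convergence of finite-volume spectra, or the Borel--Cantelli trial-function argument) is sound.
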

\begin{proof}
Statements (\ref{diffspr}) are consequences of (\ref{snsa}), (\ref{convev}), and properties of finite volume approximations of self adjoint operators. Thus $T^{\Lambda}_\omega(g)$ is cnu for $|\Lambda|$ large enough.
Given (\ref{snsa}),  $T_\omega(g)$ is cnu if $\pm 1$ are not eigenvalues of
$T_\omega(g)$. Suppose $\ffi_\pm\in l^2(\Z)$ are normalized and satisfy $T_\omega(g)\ffi_\pm=\pm\ffi_\pm$. Then, 
using the fact that $T^*_\omega(g)$ is a contraction, and Cauchy-Schwarz inequality, we get $T^*_\omega(g)\ffi_\pm=\pm\ffi_\pm$.
Hence $\pm 1$ are eigenvalues with eigenvectors  $\ffi_\pm$ for the self-adjoint Anderson type operator $(T_\omega(g)+T^*_\omega(g))/2$. But this is known to happen with zero probability only. \ep
\end{proof}\\

Stressing the $g$ dependence in the notation, we denote  the DOS functional on $A(\D)$ by $L^{(g)}(\cdot)$. The foregoing and Theorem \ref{convcount} immediately show that
\begin{lem} Let $dm_\omega^{\Lambda}(g)$ be the normalised counting measure of $\sigma(T^{\Lambda}_\omega(g)))$. Then, for any $g>0$,
\be
dm_\omega^{\Lambda}(g)\ra dk_g, \ \ \mbox{a.s.,  in the weak sense,}
\ee
 where $dk_g\geq 0$ is the density of states of the rescaled self-adjoint Anderson model
 $\frac{H^{\Lambda}_\omega(0)}{e^{g} + e^{-g}+B}$. 
 
 Hence, for any $f\in A(\D)$, and any $g\geq 0$,
\be \label{lgdk}
L^{(g)}(f)
=\int_{[-2+B,2+B]}f\left(\frac{x+iy}{e^{g} + e^{-g}+B}\right)dk(x),
\ee
where $dk\geq 0$ is the density of states of  $H^{\Lambda}_\omega(0)$.
\end{lem}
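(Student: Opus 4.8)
\emph{Proof proposal.} The idea is to reduce the statement to the classical theory of the density of states for ergodic self-adjoint operators by means of the similarity transform exhibited just before Lemma~\ref{ldiffspr}, and then to invoke Theorem~\ref{convcount}. For $g>0$, one starts from the fact that, with Dirichlet boundary conditions, $H^\Lambda_\omega(g)=WH^\Lambda_\omega(0)W^{-1}$ with $W\ffi_k=e^{kg}\ffi_k$ on $\cH_\Lambda$: conjugation by $W$ only rescales the hopping amplitudes and is insensitive to the truncation at $\partial\Lambda$. Similar matrices have the same eigenvalues with algebraic multiplicities, so $\sigma(T^\Lambda_\omega(g))=\sigma\big(H^\Lambda_\omega(0)/(e^g+e^{-g}+B)\big)$ as multisets, whence $dm^\Lambda_\omega(g)$ is precisely the normalised eigenvalue counting measure of the self-adjoint Dirichlet restriction $H^\Lambda_\omega(0)/(e^g+e^{-g}+B)$. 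By the standard theory for ergodic self-adjoint operators (see e.g.\ \cite{cfks,cl,Ki}), these counting measures converge weakly, a.s., to the density of states $dk_g$ of the ergodic operator $H_\omega(0)/(e^g+e^{-g}+B)$, which is the image of the density of states $dk$ of $H_\omega(0)$ under $x\mapsto x/(e^g+e^{-g}+B)$; it is a probability measure, hence $dk_g\geq0$, carried by the real segment $\sigma(H_\omega(0))/(e^g+e^{-g}+B)$. Since all of these measures are supported in a fixed compact real subinterval of $\bar\D$, this convergence also holds weakly-$*$ on $\bar\D$, which is the first assertion.

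For the representation (\ref{lgdk}) when $g>0$ I would verify the hypotheses of Theorem~\ref{convcount} for $T_\omega(g)$: measurability of $\omega\mapsto T_\omega(g)$ and the covariance (\ref{ero}) hold exactly as for the Anderson model (with $d=1$ and $\Gamma=\Z$ acting by unit translations); the decomposition (\ref{deff}) holds with $F^\Lambda$ the hopping across $\partial\Lambda$, a finite-rank operator whose trace norm is bounded uniformly in $\Lambda$ and $\omega$, hence $o(|\Lambda|)$; and $T_\omega(g)$ is cnu a.s.\ while $T^\Lambda_\omega(g)$ is cnu, by Lemma~\ref{ldiffspr}. Since the limiting measure $dk_g$ of the previous paragraph is deterministic, $d\bar m=\E(dm_\omega)=dk_g$, and Theorem~\ref{convcount} gives $L^{(g)}(f)=\int_{\bar\D}f(x+iy)\,dk_g(x,y)$. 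Changing the integration variable through $x\mapsto x/(e^g+e^{-g}+B)$ and recalling $\sigma(H_\omega(0))=E_0+[-B,B]$ from (\ref{convev}) yields (\ref{lgdk}).

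The case $g=0$ is even simpler and needs no similarity transform: $e^0+e^{-0}+B=2+B$, so $T_\omega(0)=H_\omega(0)/(2+B)$ is self-adjoint and $T^\Lambda_\omega(0)=H^\Lambda_\omega(0)/(2+B)$ is a self-adjoint contraction of spectral radius $<1$, hence cnu, while $T_\omega(0)$ is cnu a.s.\ by the argument in the proof of Lemma~\ref{ldiffspr} with $g=0$ (there $(T_\omega(0)+T_\omega(0)^*)/2=T_\omega(0)$). Here $dm^\Lambda_\omega(0)$ is already the normalised counting measure of $\sigma(H^\Lambda_\omega(0))/(2+B)$, which converges a.s.\ weakly-$*$ on $\bar\D$ to $dk_0$ by the self-adjoint theory again, so Theorem~\ref{convcount} (equivalently, the normal-case discussion of Section~\ref{normal}) applies and gives $L^{(0)}(f)=\int f\,dk_0$, i.e.\ (\ref{lgdk}) with $g=0$.

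The proof involves no real analytic difficulty; its substance is the observation that the similarity $W$ collapses the eigenvalue statistics of the genuinely non-normal matrix $T^\Lambda_\omega(g)$ onto those of a self-adjoint operator, so that the classical self-adjoint density-of-states machinery applies and Theorem~\ref{convcount} transfers it to the DOS functional. The points deserving care are the compatibility of $W$ with the Dirichlet truncation, the routine verification of the hypotheses of Theorem~\ref{convcount}, and the bookkeeping of supports and rescalings. What is worth emphasising is that all this survives despite $\|W\|\sim e^{ng}$ diverging with $\Lambda$: in infinite volume $T_\omega(g)$ is truly non-normal, with spectral radius $1$ and two-dimensional spectrum $(E_g+[-B,B])/(e^g+e^{-g}+B)$ (cf.\ (\ref{snsa}), (\ref{diffspr})), which bears no resemblance to the one-dimensional measure $dk_g$ that the DOS functional nonetheless reproduces.
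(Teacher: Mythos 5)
Your proposal is correct and follows exactly the route the paper intends: the paper offers no separate proof, stating only that ``the foregoing and Theorem \ref{convcount} immediately show'' the lemma, where ``the foregoing'' is precisely the similarity $H^\Lambda_\omega(g)=WH^\Lambda_\omega(0)W^{-1}$ reducing the finite-volume eigenvalue statistics to those of the rescaled self-adjoint Anderson model, whose counting measures converge a.s.\ to the density of states by the classical theory. Your writeup is a faithful and complete expansion of that argument, including the verification of the hypotheses of Theorem \ref{convcount} and the $g=0$ case.
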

\begin{rem}
The support of $dk_g$ is a subset of the almost sure spectrum of $T_\omega(g)$.  
\end{rem}
The estimates provided in Lemma \ref{ldiffspr} show that $L^{(g)}(\cdot)$ admits an integral representation on $\T$ in term of a holomorphic function $\psi^{(g)}(z)$, see Theorem \ref{smoothfi}. Moreover,
\begin{lem} Set  $s(g)=e^{g} + e^{-g}+B$.
The function $\psi^{(g)}$ is real analytic and admits a converging power series in $\{|z|<{s(g)}/{(2+B)}\}$.
Moreover, $\psi^{(g)}$ admits an analytic continuation on $\C\setminus\{[s(g)/(2+B), \infty[\cup ]-\infty, -s(g)/(2+B)]\}$ given by
\be\label{nicepsig}
\psi^{(g)}(z)=\int_{[-(2+B),2+B]}\frac{dk(x)}{1-zx/s(g)}.
\ee
\end{lem}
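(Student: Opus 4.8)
\ The plan is to read the representation off Theorem~\ref{convcount} and the preceding lemma, and then to extend it by a standard Cauchy-transform argument. First I would identify the measure $d\bar m$ of Theorem~\ref{convcount}: by the preceding lemma the random counting measures $dm_\omega^\Lambda(g)$ converge almost surely to the \emph{deterministic} measure $dk_g$, so $d\bar m=\E(dm_\omega)=dk_g$, and $dk_g$ is the push-forward of $dk$ under $x\mapsto x/s(g)$, hence a probability measure carried by the real interval $[-(2+B),2+B]/s(g)$. The inclusion $\operatorname{supp}dk\subset[-(2+B),2+B]$ comes from $\|H_\omega^\Lambda(0)\|\le 2+B$, which puts all finite-volume eigenvalues, and thus the weak limit, in that interval; this same norm bound also furnishes the uniform estimate $\spr(T_\omega^\Lambda(g))\le(2+B)/s(g)<1$ needed to apply Theorems~\ref{smoothfi} and~\ref{convcount} and is what underlies Lemma~\ref{ldiffspr}.

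Next I would obtain the first two assertions from the Taylor coefficients. By Corollary~\ref{scalprod}, writing $\psi^{(g)}(z)=\sum_{k\ge0}b_kz^k$ one has $\bar b_k=L^{(g)}(z^k)$, and (\ref{lgdk}) gives $\bar b_k=\int_{[-(2+B),2+B]}(x/s(g))^k\,dk(x)$. Since $dk\ge0$ is supported on a real set, each $b_k$ is real, so $\psi^{(g)}$ is real analytic; and $|b_k|\le\big((2+B)/s(g)\big)^k$, so the power series converges on $\{|z|<s(g)/(2+B)\}$. On that disc $\sum_k|z|^k\int|x/s(g)|^k\,dk(x)<\infty$, so Fubini permits exchanging the sum and the integral and yields $\psi^{(g)}(z)=\int_{[-(2+B),2+B]}(1-zx/s(g))^{-1}\,dk(x)$ for $|z|<s(g)/(2+B)$; this is in any case also immediate from (\ref{borel}) once $d\bar m=dk_g$ is known and one inserts $x-iy=x$.

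Finally, for the analytic continuation I would set $D_g=\C\setminus\big((-\infty,-s(g)/(2+B)]\cup[s(g)/(2+B),\infty)\big)$, which is precisely the set of $z$ with $1-zx/s(g)\ne0$ for every $x\in[-(2+B),2+B]$, since $\{s(g)/x:0<|x|\le 2+B\}$ is exactly the pair of removed rays. On a compact $K\subset D_g$ one has $\inf_{z\in K,\,x\in\operatorname{supp}dk}|1-zx/s(g)|>0$, so the integrand is bounded uniformly in $z\in K$; being jointly continuous and holomorphic in $z$ for each $x$, Morera's theorem together with Fubini (equivalently, differentiation under the integral sign) shows that $z\mapsto\int_{[-(2+B),2+B]}(1-zx/s(g))^{-1}\,dk(x)$ is holomorphic on $D_g$. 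Since $D_g$ is open, connected, and contains the disc $\{|z|<s(g)/(2+B)\}$ on which it coincides with $\psi^{(g)}$, the identity theorem identifies it with the analytic continuation of $\psi^{(g)}$, which is exactly (\ref{nicepsig}).

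The only steps that are not essentially mechanical are the identification $d\bar m=dk_g=$ push-forward of $dk$, which requires the preceding lemma and some care with the rescaling by $s(g)$, and the observation that $D_g$ is connected, which is what makes the continuation well defined; the holomorphy of the Cauchy-type integral and the term-by-term summation of the power series are routine.
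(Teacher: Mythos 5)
Your proposal is correct and follows essentially the same route as the paper: the paper's proof simply declares the lemma a special case of Theorem~\ref{convcount}, records the Taylor coefficients $b_n=\bar b_n=\int x^n\,dk(x)/s(g)^n$ with the bound $|b_n|\le((2+B)/s(g))^n$, and leaves the rest implicit. You supply the same coefficients via Corollary~\ref{scalprod} and (\ref{lgdk}) and merely fill in the routine details (identification of $d\bar m$ with the push-forward of $dk$, Fubini, and the Morera/identity-theorem argument for the continuation) that the paper omits.
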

\begin{rem}
With the Borel transform $F_k$ of $dk$ given by
\be
F_k(z)=\int_{[-2+B,2+B]}\frac{dk(x)}{x-z}, \ \ \forall z\in \C\setminus [-2+B,2+B],
\ee
we have for $z\neq 0$,
\be
\psi^{(g)}(z)=-F_k\left({s(g)}/{z}\right){s(g)}/{z}.
\ee
\end{rem}
\begin{proof} It is a special case of Theorem \ref{convcount} where  the coefficients of the power expansion of $\psi^{(g)}$ are given by
\be
b_n=\overline b_n=\frac{\int_{[-2+B,2+B]} x^n dk(x)}{(e^{g} + e^{-g}+B)^n}, \ \ , |b_n|\leq \frac{(2+B)^n}{s(g)^n}\ \ \forall n\in\N.
\ee
\ep
\end{proof}
\subsection{Non Unitary Band Matrices}\label{nubm}
We further illustrate both the discrepancy between limiting measure and spectrum of the full operator, and the multiplicity of representations of the DOS functional by measures on $\bar \D$ by considering 
random contractions with a band matrix representation.\bigskip

Let $T_\omega$ be defined on $l^2(\Z)$ by its matrix representation in the canonical basis given as
\be\label{matrixt1}
T_\omega=\begin{pmatrix}
\ddots & e^{i\omega_{2j-1}}\gamma &e^{i\omega_{2j-1}} \delta & & & \cr
            &0                                          &0                                                   & & &\cr
             &0                                          &0      & e^{i\omega_{2j+1}}\gamma &e^{i\omega_{2j+1}}\delta &  \cr               
             & e^{i\omega_{2j+2}}\alpha &e^{i\omega_{2j+2}}\beta & 0& 0 &     \cr
             & & & 0                                          &0  &  \cr
           & &  & e^{i\omega_{2j+4}}\alpha &e^{i\omega_{2j+4}}\beta & \ddots
\end{pmatrix},
\ee
where $\{\omega_j\}_{j\in\Z}$ are $\T$-valued iid random variables. The deterministic coefficients $\alpha, \beta,\gamma, \delta$ are assumed to give rise  to a non unitary matrix  
\be\label{concon}
C_0=\begin{pmatrix}\alpha & \beta \cr \gamma & \delta \end{pmatrix} \ \mbox{s.t. } \ \ \|C_0\|_{\C^2}\leq 1.
\ee
Actually, $T_\omega= D_\omega T$, where $D_\omega$ is a diagonal unitary random operator with elements $e^{i\omega_j}$, and $T$ is a deterministic operator whose representation has the form (\ref{matrixt1}), where all $\omega_j$'s are equal to zero.
Such random ergodic operators arise in the analysis of certain random quantum walks and some of their spectral 
properties are analysed in \cite{HJ2}. In particular, it is shown there that $T_\omega$ is a cnu contraction on $l^2(\Z)$ if and only if 
\be \label{cnucon}
|\det C_0|<1, |\alpha|<1, \ \mbox{ and }\ |\delta|<1.
\ee
Moreover, $\|T_\omega\|=1$, and $\spr (T_\omega)$ may take the value 1, depending on the parameters.

We first note the following simple general result:
\begin{lem}\label{ldelt} Let $T_\omega$ defined by (\ref{matrixt1}) with (\ref{concon}), and (\ref{cnucon}), and assume the random phases $e^{i\omega_j}$ are uniformly distributed. Then
\be \label{delt}
L(f)=f(0),
\ \ \forall  f\in A(\D).
\ee 
\end{lem}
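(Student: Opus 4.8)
The plan is to compute $L(f)$ from Definition \ref{defdos} by showing that $\E(\bra\ffi_b|f(T_\omega)\ffi_b\ket)=f(0)$ for every $b$. Since $L$ is bounded on $H^\infty(\D)$ and $A(\D)$ is generated (under bounded pointwise limits, cf.\ the proof of Proposition \ref{repfun}) by the monomials $z^k$, $k\in\N$, it suffices to verify that $L(z^k)=\delta_{k,0}$, i.e.\ that $\frac1{|B|}\sum_{b\in B}\E(\bra\ffi_b| T_\omega^k\ffi_b\ket)=0$ for all $k\geq 1$. Concretely, I would use the factorisation $T_\omega=D_\omega T$ noted in the text, where $D_\omega=\mathrm{diag}(e^{i\omega_j})$ with the $\omega_j$ i.i.d.\ uniform on $\T$, and $T$ is the deterministic band operator of the form (\ref{matrixt1}) with all phases set to $1$.

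The key observation is a phase-counting/gauge argument. Expanding $T_\omega^k=D_\omega T D_\omega T\cdots D_\omega T$ and taking the matrix element $\bra\ffi_b|\cdot\ffi_b\ket$, one gets a sum over closed paths $b=j_0\to j_1\to\cdots\to j_k=b$ of length $k$ in the band graph, each contributing $\Big(\prod_{\ell=1}^k e^{i\omega_{j_{\ell-1}}}\Big)\prod_{\ell=1}^k T_{j_{\ell-1},j_\ell}$ — here the phase picked up at step $\ell$ is the row index $j_{\ell-1}$ (the index acted on by $D_\omega$). Taking $\E$ and using independence and $\E(e^{i m\omega_j})=0$ for $m\neq 0$, the expectation of a given path is nonzero only if every site index occurring as a source vertex along the path appears a number of times that is a multiple of $2\pi$-... more precisely, only if the multiset of source-indices $\{j_0,\dots,j_{k-1}\}$ has every element with multiplicity $\equiv 0$; since $\E(e^{in\omega})=0$ unless $n=0$ and $n$ counts net occurrences, a nonzero contribution forces each visited source site to be visited zero times net — impossible for a path of length $k\geq 1$ starting at $b$, because $j_0=b$ is a source exactly once unless the path returns to $b$ and leaves again, and one checks the parity/multiplicity constraint can never be met on this particular bipartite band structure. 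I would make this precise by noting the structure of (\ref{matrixt1}): the nonzero entries link each odd site to the two sites $\{2j,2j+1\}$ above it and each even site down to $\{2j-1,2j-2\}$-type neighbours, so any closed path visits the starting site $b$ a total number of times whose associated phase $e^{i N_b \omega_b}$ has $N_b\geq 1$ and cannot be cancelled, killing the expectation.

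I expect the \textbf{main obstacle} to be turning this heuristic into an airtight combinatorial statement: one must argue carefully that for \emph{every} closed path of length $k\geq1$ there is at least one site index $j$ whose net count $N_j$ (weighted by $\pm1$ according to whether $j$ appears as the $D_\omega$-index) is nonzero, so that $\E\big(\prod_\ell e^{i\omega_{j_{\ell-1}}}\big)=\prod_j\E(e^{iN_j\omega_j})=0$. Actually the cleanest route avoids path expansion entirely: observe $\E(T_\omega^k)=\E(D_\omega T D_\omega T\cdots D_\omega T)$ and compute this expectation entrywise. For fixed entry indices the phases $e^{i\omega_j}$ appearing are indexed by the intermediate row-indices, each $\omega_j$ independent and mean-zero, and any nonzero term requires the total exponent of each $e^{i\omega_j}$ to vanish; since there are $k\geq1$ factors $D_\omega$ and the band graph is bipartite with no self-loops, the row-index sequence can never have all net exponents zero — hence $\E(T_\omega^k)=0$ as an operator, and a fortiori $\bra\ffi_b|\E(T_\omega^k)\ffi_b\ket=0$. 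Then $L(z^k)=0$ for $k\geq1$ and $L(1)=1$, giving $L(f)=f(0)$ on monomials and, by weak continuity (Proposition \ref{repfun}) and density of polynomials in $A(\D)$, on all of $A(\D)$. I would close by remarking this identifies $\ffi\equiv 1$ and $d\mu=\delta_0$, consistent with Propositions \ref{repfun}--\ref{harmden}.
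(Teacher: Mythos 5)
Your proposal is correct and takes essentially the same approach as the paper: reduce to monomials $z^k$ via polynomial density in $A(\D)$ and the bound $|L(f)|\leq\|f\|_\infty$, then observe that every path contribution to $\bra \ffi_b|T_\omega^k\ffi_b\ket$ carries the phase $\omega_b$ with a strictly positive integer exponent (since $T_\omega^k=(D_\omega T)^k$ contains no conjugate phases), so its expectation vanishes for all $k\geq 1$. The combinatorial obstacle you anticipate is in fact absent: every phase enters with exponent $+1$, so the net count at the starting site is automatically at least one and no parity or bipartiteness argument is needed.
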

\begin{proof} 
One first observes that for any $n\in\N^*$, any $k\in\Z$, $\bra \ffi_k | T_\omega^n \ffi_k \ket=e^{i m(k) \omega_k}\tau(\hat \omega(k))$, where $\tau(\hat \omega(k))$ is a random variable independent of $\omega_k$, and $m(k)\in \N^*$. This is a consequence of the fact that no cancellation of the random phases can occur due to the shape  (\ref{matrixt1}) of the matrix representation of
$T_\omega$. Hence, $\E (\bra \ffi_k | T_\omega^n \ffi_k \ket)=0$, for all $k\in \Z$ and $n\in \N^*$. Thus, approximating any $f\in A(\D)$ by a polynomial, we get the result.\ep 
\end{proof}\\
\begin{rem}\label{xxx}
We get from (\ref{delt}) that the integral representations of $L(\cdot)$ on $\T$ and $\D$ are given by $\ffi\equiv 1$ and $m_\ffi\equiv \frac1\pi$ respectively.
\end{rem}

The DOS functional (\ref{delt}) can be extended to $C(\bar \D)$ as integration against a Dirac measure at the origin, but not only. For example, using in polar coordinates, defining for any $0<\rho<1$
\be
\hat L^{(\rho)}(f)=\frac{1}{2\pi}\int_{\bar\D} f(r\cos(\theta),r\sin(\theta))\delta_{\rho}(r)d\theta = \int_{\bar\D} f(x,y)d\mu^{(\rho)}(x,y),
\ee
we get a one parameter family of extensions of $L$ corresponding to integration against positive measures on $\D$, $d\mu^{(\rho)}$ with support of the circle of radius $\rho$, centered at the origin, such that $\hat L^{(\rho)}(f)=f(0)$ if $f\in A(\D)$. The support of $d\mu^{(\rho)}$ may or may not belong to the spectrum of $T_\omega$: the examples treated in \cite{HJ2} show that for $\det C_0=0$ and uniform i.i.d. phases $\omega_j$, the spectrum $T_\omega$ is given by the origin and ring centered at the origin whose radiuses depend on the parameters.  Note finally that any  radial probability measure with smooth density $\mu(r)$ provides us with an extension $\hat L(\cdot)$ on 
$C(\bar \D)$ of the form
\be
\hat L(f)=\int_{\bar\D} f(r\cos(\theta),r\sin(\theta))\mu(r)r dr d\theta,
\ee
which agrees with $L(\cdot)$ on $A(\D)$, thanks to the mean value property of harmonic functions.\\

We now specify the values of the parameters to $C_0=\begin{pmatrix} 1 & 0 \\ 0 & g\end{pmatrix}$, $0<g<1$, see (\ref{concon}), and look at finite volume restriction generated by finite rank perturbations. Note that in this case, $T_\omega$ is not cnu since
\be
T_\omega\simeq S\oplus g S,
\ee
where $\simeq$ denotes unitary equivalence and $S$ is the standard shift on $l^2(\Z)$. Remark \ref{extunit} shows we can nevertheless apply our results to this case.

Setting $\Lambda=\{1,2,\dots, 2n\}$, we define $T^\Lambda=D_\omega^\Lambda T^\Lambda$, with $D_\omega^\Lambda$ unitary and diagonal and 
\be
T^\Lambda=\begin{pmatrix}
0 & 0 & g & & & & \cr
1 & 0 & 0 & & & &\cr
   & 0 & 0 & & & &\cr               
   & 1&0 & \ddots & 0& g & \cr
   & & & & 0 & 0 &  \cr
   & & & & 0 & 0 & g\cr
   & & & & 1 &0 & 0
\end{pmatrix}.
\ee
This yields a finite volume restriction of $T_\omega$, that is a cnu contraction with eigenvalues $\{\lambda_j\}_{j=1}^{2n}$
\be
\lambda_j(\omega)=\sqrt g e^{\frac{i}{2n}\sum_{k=1}^{2n}{\omega_k}}e^{i(j-1)\pi/n}, \ \ j=1,\dots,2n.
\ee
The factor $\frac{1}{2n}\sum_{k=1}^{2n}{\omega_k}$ tends to $0$ as $n\ra\infty$ almost surely, by our assumption on the distribution of the phases, and we get for any $f\in C(\bar \D)$,
\be
\lim_{n\ra\infty}\frac{1}{2n}\sum_{j=1}^{2n} f(\Re\lambda_j(\omega), \Im\lambda_j(\omega))= \int_{[0,2\pi]}f(\sqrt g \cos(\theta), \sqrt g \sin(\theta))\frac{d\theta}{2\pi},
\ee
by a Riemann sum argument. In other words, the normalised counting measure on $\sigma(T^{\Lambda}_\omega)$ converges weakly to $dm=\delta_{\sqrt g}(r)\frac{d\theta}{2\pi}$, in polar coordinates. As $\spr (T^\Lambda_\omega)=\sqrt g<1$, we compute from Theorem \ref{convcount},
\be
 \psi(z)=\frac{1}{2\pi}\int_{[0,2\pi]}\frac{d\theta}{1-z\sqrt g e^{-i\theta}}=1, \ \ \forall z\ \mbox{s.t.} \ |z|<1/\sqrt g,
\ee
so that we recover $\ffi(e^{it})=1$, in keeping with Remark \ref{xxx}
\begin{rem}
The support of the limiting measure $dm$ is disjoint from $\sigma(T_\omega)$ in this case:
\be
\mbox{supp }dm\cap \sigma(T_\omega)=\sqrt g\partial \D\cap \{g\partial \D\cup\partial D\}=\emptyset. 
\ee
\end{rem}
 {

}
\end{document}